\providecommand{\U}[1]{\protect\rule{.1in}{.1in}}
\newtheorem{theorem}{Theorem}
\newtheorem{corollary}[theorem]{Corollary}
\newtheorem{definition}[theorem]{Definition}
\newtheorem{example}[theorem]{Example}
\newtheorem{lemma}[theorem]{Lemma}
\newtheorem{notation}[theorem]{Notation}
\newtheorem{proposition}[theorem]{Proposition}
\newtheorem{remark}[theorem]{Remark}
\begin{document}

\def\sres{\operatorname*{sres}}

\title{B\'ezout Subresultants for Univariate Polynomials\\in General Basis}

\author{Jing Yang, Wei Yang\thanks{%
Corresponding author.}\\[5pt]
SMS--HCIC--School of Mathematics and Physics,\\
Center for Applied Mathematics of Guangxi,\\
Guangxi Minzu University, Nanning 530006, China\\
yangjing0930@gmail.com; weiyang020499@163.com}
\date{}
\maketitle
\begin{abstract}
{Subresultant is a powerful tool for developing various algorithms in computer algebra. Subresultants for polynomials in standard basis (i.e., power basis) have been well studied so far. With the popularity of basis-preserving algorithms, resultants and subresultants in non-standard basis are drawing more and more attention. In this paper, we develop a formula for B\'ezout subresultants of univariate polynomials in general basis, which covers a broad range of non-standard bases. More explicitly, the input polynomials are provided in a given general basis and the resulting subresultants are B\'ezout-type expressions in the same basis. It is shown that the  subresultants share the essential properties as the subresultants in standard basis.}

\end{abstract}

\section{Introduction}
Resultant theory plays a fundamental role in computer algebra.
Due to its importance,
extensive research has been carried out both in theoretical and practical aspects on resultants, subresultants and their variants (just list a few
\cite{sylvester1853,lascoux2003,diaz2004various,collins1967,
barnett1971greatest,Ho_1989,li2006,
terui2008,
bostan2017,
hy2021,
cox2021,
hongyang2021subresultant}).
However, most of the studies are focused on polynomials in standard basis
(also known as power basis). More explicitly, the input polynomials are formulated in standard basis  and so are the output subresultant polynomials. With the increasing popularity of basis-preserving algorithms in various applications \cite{fr1987,goodman1991,cp1993,aadr2006,jj2003}, people are more and more interested in resultants and subresultants for polynomials in non-standard basis (see \cite{bl2004,mm2007,acgs2007,wangyang2022sres}).
In this paper, we will study the B\'ezout subresultant polynomial in  general basis (see Definition \ref{def:general_basis}), which covers a wide range of non-standard bases.
Many well-known bases lie in this {category}, e.g., Newton basis and {Horner} basis, and standard basis can also be viewed as an instance of general basis. In the settings of the paper, the
input polynomials are expressed in  general basis and so are the output subresultant polynomials.

Typically subresultant polynomials are expressed in the form of determinental polynomial.
It is because determinental polynomials are the sum of minors of resultant matrices which possess nice algebraic structures. These structures  often bring lots of convenience for theoretical development and subsequent applications.
In this paper, we
follow this approach. Specifically, we  extend the concept of determinental polynomial from power basis to general basis and develop a generalized determinental polynomial formula for subresultant polynomials of two univariate polynomials.
The matrix used for formulating the determinental polynomial is B\'ezout matrix and thus we call the resulting subresultant polynomial \emph{B\'ezout subresultant polynomial}.
It is shown that the formulated subresultant polynomial possesses the  essential properties as the subresultant polynomial in standard basis (see Proposition \ref{thm:gcd}).

For developing the formula for B\'ezout subresultant polynomials of univariate polynomials in general basis, we first review the definition of B\'ezout matrix in non-standard basis given in \cite{acgs2007}.
Since  the degree of subresultant polynomial is often smaller than the input polynomials, in order to formulate the  subresultant polynomial in the given basis, we put a natural constraint on the non-standard basis and require the basis to be general. With  the above settings, we show that the subresultant polynomial can be written as the generalized determinental polynomial of a certain submatrix of the B\'ezout matrix in nonstandard basis.

Compared with other previous related works, the newly developed formula for the B\'ezout subresultant polynomial of univariate polynomials in general basis has the following features. First, it can be viewed as a generalization of the B\'ezout-type
subresultant polynomials in power basis \cite{houwang2000,apery2006resultant}. Second, it is also a generalization of the resultant in non-standard basis
\cite{barnett1987bezoutian,
hoon2002,
DazToca2014TheNS,
YANG2001,
Wu2010}.
It should be pointed out that the main contribution of the paper is that it verifies the
possibility of formulating subresultant polynomials in general basis
with B\'ezout matrix in theory rather than coming up with an efficient way to compute the subresultant in an arbitrarily given general basis.
Actually it is almost impossible to provide a uniform solution to the latter since it heavily relies on the structure of the chosen basis.

The paper is structured as follows. In Section \ref{sec:preliminaries}, we first review the concepts of B\'ezout resultant matrix and subresultant polynomial defined in roots  as well as  its inherent connection with the gcd problem.
Section \ref{sec:main_result} is devoted to present the  main result of the paper (see Theorem \ref{thm:main}).
The correctness of the main theorem is verified in Section \ref{sec:proofs}. The paper is concluded in Section \ref{sec:conclusion} with some further remarks.

\section{Preliminaries}\label{sec:preliminaries}

Throughout the paper, we assume $\mathbb{F}$  to be  the fractional field of an integral domain and $\overline{\mathbb{F}}$  the algebraic closure of $\mathbb{F}$. Let $\mathbb{F}_n[x]$ denote the set consisting of all polynomials in $\mathbb{F}[x]$ with degree no greater than $n$.

\subsection{B\'ezout resultant in general basis}
\begin{definition}\label{def:general_basis}
Let $\boldsymbol{\omega}(x)=(\omega_s,\ldots,\omega_{1},\omega_0)^{T}$ be the basis of $\mathbb{F}_s[x]$ where $\omega_i$ is monic and $\deg \omega_i=i$. Then we call $\boldsymbol{\omega}(x)$ (or $\boldsymbol{\omega}$ for short) a \emph{general basis} of $\mathbb{F}_s[x]$.
If no ambiguity occurs, we also call $\boldsymbol{\omega}(x)$ or $\boldsymbol{\omega}$ a {general basis} for short.
\end{definition}

For example, the standard basis $\boldsymbol{x}=(x^s,\ldots,x^{1},x^0)^{T}$ is a particular specialization of general basis of  $\mathbb{F}_s[x]$ since $\deg x^i=i$.

Another specialization of general basis frequently used is the Newton basis
$\boldsymbol{\omega}(x)=(\omega_s,\ldots,$
$\omega_{1},\omega_0)^{T}$ associated with $\lambda=(\lambda_s,\ldots,\lambda_1)\in\mathbb{F}^{s}$
where
$$
\omega_i=\left\{
\begin{array}{ll}
1, & \text{for~}i=0; \\
(x-\lambda_i)\omega_{i-1},  & \text{for~}i>0,
\end{array}
\right.
$$
since it is easy to verify that $\deg \omega_i=i$.
In this paper, we are mainly concerned about B\'ezout subresultants for polynomials in  general basis.

Consider $F,G\in\mathbb{F}[x]$ with degrees $n$ and $m$ (where $n>m$) in a general basis $\boldsymbol{\omega}(x)=(\omega_n,\ldots,\omega_1,\omega_0)^{T}$.
More explicitly,
we assume

\begin{equation}\label{eq:F+G}
F\left ( x \right ) =\sum_{i=0}^{n} a_{i} \omega_{i} ,\quad G\left ( x \right ) =\sum_{i=0}^{m} b_{i} \omega_{i } .
\end{equation}

To construct the B\'ezout matrix of $F$ and $G$ in $\boldsymbol{\omega}$, we recall the well known Cayley quotient below.
The Cayley quotient of $F$ and $G$ is defined as
\[C(x,y)=\frac{\left |\begin{matrix}
F\left ( x \right )  & F\left (  y\right ) \\
G\left ( x \right )  & G\left ( y \right )
\end{matrix} \right | }{x-y}.\]
It is noted that the Cayley quotient of two polynomials is independent on the basis used to formulate the input polynomials.

\begin{definition}[B\'ezout matrix]\label{def:bez_mat}
Let $\boldsymbol{\omega}=(\omega_n,$ $\ldots,\omega_1,\omega_0)^T$ be a general basis of $\mathbb{F}_n[x]$ and $F,G$ be as in \eqref{eq:F+G}.
Then
the \emph{B\'ezout matrix} of $F$ and $G$ in the basis $\boldsymbol{\omega}$ is defined as an $n\times n$ matrix $\boldsymbol{B_\omega}$ such that
\[C(x,y)=\bar{\boldsymbol{\omega}}(x)^T\cdot \boldsymbol{B_\omega}\cdot\bar{\boldsymbol{\omega}}(y)\]
where
$\bar{\boldsymbol{\omega}}=(\omega_{n-1},\ldots,\omega_0)^{T}$.
\end{definition}

Note that the B\'ezout matrix in Definition \ref{def:bez_mat} is highly dependent on the basis. When $\boldsymbol{\omega}=\boldsymbol{x}$, $\boldsymbol{B_\omega}$ becomes the familiar B\'ezout matrix in standard basis, denoted by $\boldsymbol{B}$.
Obviously, $\boldsymbol{B}=\boldsymbol{U}^T\boldsymbol{B_\omega}\boldsymbol{U}$ where $\boldsymbol{U}$ is the transition matrix from the basis $\bar{\boldsymbol{x}}=(x^{n-1},\ldots,x^1,x^0)^T$ to $\bar{\boldsymbol{\omega}}$.

\subsection{Subresultants in standard basis}
In classical resultant theory,  subresultant extends the concept of resultant and it is defined as the leading coefficient of the subresultant polynomial which is written as a determinental polynomial of a certain submatrix of the well known Sylvester resultant matrix of the input polynomials.
In \cite{houwang2000}, Hou and Wang proved that subresultant polynomial could also be expressed as the determinental polynomial of some submatrix of the B\'ezout matrix. It should be pointed out that the basis used to construct the B\'ezout matrix and formulate subresultant polynomials and subresultants in \cite{houwang2000} is standard basis.
Note that  resultant and subresultants in the previous discussion are formulated as polynomials in terms of the coefficients of $F$ and $G$ which highly depend on the basis and the adopted resultant matrices. In  \cite{1999_Hong,diaz2004various,2006D'Andrea}, Hong et al. provided  an equivalent definition of subresultant in terms of roots, which does not depend on both of them.
In the remaining part, we use the definition of subresultant in terms of roots. Before presenting the formal definition, we introduce the following notations for the sake of simplicity.

\begin{notation}\ 
\begin{itemize}
\item $\alpha_1,\ldots,\alpha_n$ are  the $n$ roots of $F$ over $\overline{\mathbb{F}}$;

\item $\boldsymbol{\alpha}:=(\alpha_1,\ldots,\alpha_n)$;

\item $\boldsymbol{\alpha}_{ij}:=(\alpha_i^{j},\ldots,\alpha_i^1,\alpha_i^0)^T$;

\item $\boldsymbol{V}(\boldsymbol{\alpha}):=\begin{bmatrix}
\alpha_1^{n-1}&\cdots&\alpha_n^{n-1}\\
\vdots&\ddots&\vdots\\
\alpha_1^{0}&\cdots&\alpha_n^{0}
\end{bmatrix}$;

\item $\boldsymbol{x}_j:=(x^j,\ldots,x^1,x^0)^T$.
\end{itemize}
\end{notation}

\begin{definition}\label{def:sres}
Given $F,G\in \mathbb{F}[x]$ with degree $n$ and $m$ respectively, let $a_n$ be the leading coefficients of $F$.
For $0\le k\le\min(m,n)$, the \emph{$k$-th subresultant polynomial} of $F$ and $G$ with respect to $x$ is defined as
\begin{equation}\label{eq:sres}
S_k:=\frac{c\left|
\begin{array}{lclc}
\boldsymbol{\alpha}_{1,n-k-1}G(\alpha_1)&\cdots&\boldsymbol{\alpha}_{n,n-k-1}G(\alpha_n)&\\
\boldsymbol{\alpha}_{1,k}&\cdots&\boldsymbol{\alpha}_{n,k}&\boldsymbol{x}_k
\end{array}\right|}{|\boldsymbol{V}(\boldsymbol{\alpha})|}
\end{equation}

where $c=(-1)^{k}a_n^{m-k}$. More explicitly,
\begin{equation*}
S_k=\frac{c\left|
\begin{array}{ccc|c}
\alpha_1^{n-k-1}G(\alpha_1) & \cdots & \alpha_n^{n-k-1}G(\alpha_n) &\\
\vdots & \ddots & \vdots &\\
\alpha_1^{0}G(\alpha_1) & \cdots & \alpha_n^{0}G(\alpha_n) &\\\hline
\alpha_1^k & \cdots & \alpha_n^k &x^k\\
\vdots & \ddots & \vdots &\vdots\\
\alpha_1^0 & \cdots & \alpha_n^0 &x^0
\end{array}\right|}{
\left|
\begin{array}{ccc}
\alpha_1^{n-1} & \cdots & \alpha_n^{n-1}\\
\vdots & \ddots & \vdots\\
\alpha_1^{0} & \cdots & \alpha_n^{0}
\end{array}
\right|}.
\end{equation*}

The coefficient of $S_k$ in $x^k$, denoted by $s_k$, is called the \emph{$k$-th subresultant} of $F$ and $G$ with respect to $x$.
\end{definition}

\begin{remark}\label{rem:interpreting}
The expression for $S_k$ in \eqref{eq:sres} should  be interpreted as follows, otherwise the denominator will vanish when $F$ is not squarefree.
\begin{enumerate}[(1)]
\item
Treat $\alpha_1,\ldots,\alpha_n$ as indeterminates and carry out the exact division, which results in a symmetric polynomial in terms of $\alpha_1,\ldots,,\alpha_{n}$.
\item
Evaluate the polynomial with $\alpha_1,\ldots,\alpha_n$  assigned the value of roots of $F$.
\end{enumerate}
\end{remark}

The following proposition captures the inherent connection between subresultant polynomials and the greatest common divisor of $F$ and $G$ and plays an essential role in many fundamental algorithms in computer algebra.

\begin{proposition}\label{thm:gcd}
Given $F,G\in \mathbb{F}[x]$, the following two conditions are equivalent:
\begin{enumerate}
\item $\deg \gcd(F,G)=k$;
\item $s_0=\cdots =s_{k-1}=0\wedge s_k\ne0$.
\end{enumerate}
Moreover, if $\deg \gcd(F,G)=k$, we have
\[\gcd(F,G)=S_k.\]
\end{proposition}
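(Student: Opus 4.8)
The plan is to work directly from the root-based definition \eqref{eq:sres}, exploiting the block structure of the $(n+1)\times(n+1)$ determinant in the numerator of $S_k$ together with the Vandermonde identity. Throughout, write $d=\deg\gcd(F,G)$. I would first treat the generic case in which $F$ is squarefree, so that its roots $\alpha_1,\dots,\alpha_n$ are distinct and $|\boldsymbol{V}(\boldsymbol{\alpha})|\neq0$; after relabelling, let $\alpha_1,\dots,\alpha_d$ be the common roots of $F$ and $G$, so that $G(\alpha_i)=0$ for $1\le i\le d$ while $G(\alpha_i)\neq0$ for $d<i\le n$. Note that, since the indeterminate $x$ enters only the last column through the powers $x^k,\dots,x^0$, the polynomial $S_k$ has degree at most $k$ in $x$, and $s_k$ is its top coefficient.

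The first step is a Laplace expansion of the numerator determinant along its top $n-k$ rows, namely the rows carrying the factors $G(\alpha_i)$. Because $G$ vanishes at the $d$ common roots, these top rows are supported on only $n-d$ of the columns. Hence every nonzero term of the expansion must select $n-k$ columns from among these $n-d$ admissible ones. When $k<d$ this is impossible, so the numerator vanishes identically and $S_k\equiv0$; in particular $s_k=0$ for all $k<d$. When $k=d$ there is a unique admissible choice (the columns $d+1,\dots,n$), and the determinant factors as the product of a top minor — which, after pulling out the scalars $G(\alpha_i)$, is a Vandermonde determinant in $\alpha_{d+1},\dots,\alpha_n$ — and the complementary bottom $(d+1)\times(d+1)$ Vandermonde-type determinant in $\alpha_1,\dots,\alpha_d,x$. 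The latter equals, up to a Vandermonde constant, $\prod_{i=1}^{d}(x-\alpha_i)$, which is exactly the monic associate of $\gcd(F,G)$.

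The second step is to assemble the pieces. Dividing by $|\boldsymbol{V}(\boldsymbol{\alpha})|$ and inserting $c=(-1)^{d}a_n^{m-d}$, all the Vandermonde factors and the product $\prod_{i>d}G(\alpha_i)$ collapse into a single scalar multiplying $\prod_{i=1}^{d}(x-\alpha_i)$. Since $a_n\neq0$, the $\alpha_i$ are distinct, and $G(\alpha_i)\neq0$ for $i>d$, this scalar is nonzero; therefore $S_d$ is a nonzero constant multiple of $\gcd(F,G)$, whence $S_d$ is a greatest common divisor of $F$ and $G$ and its leading coefficient $s_d\neq0$. This yields the implication $(1)\Rightarrow(2)$ together with the final identity $\gcd(F,G)=S_k$ (up to the usual normalization of the gcd over a field; the exact value of the scalar can be computed but only its nonvanishing is needed). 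The converse $(2)\Rightarrow(1)$ then follows formally: for a fixed pair $(F,G)$ the index of the first nonvanishing $s_k$ is unique, and the forward direction shows it equals $d$, so the conditions in $(2)$ force $k=d=\deg\gcd(F,G)$.

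The hard part, and the last step, is to remove the squarefreeness hypothesis on $F$, where $|\boldsymbol{V}(\boldsymbol{\alpha})|$ vanishes and common roots may occur with multiplicity. Here I would invoke the interpretation in Remark \ref{rem:interpreting}: treat $\alpha_1,\dots,\alpha_n$ as indeterminates, perform the exact division to realize $S_k$ as a genuine symmetric polynomial identity valid on the squarefree locus, and then specialize. Since $S_k$ and each $s_k$ are polynomials in the coefficients of $F$ and $G$, the vanishing and factorization relations extend by continuity from the Zariski-dense squarefree locus. The delicate point — the main obstacle — is verifying that this limiting process correctly reproduces $\gcd(F,G)$ with the right multiplicities; I expect this to follow from a confluence argument that lets the roots coalesce while tracking the factor $\prod_{i=1}^{d}(x-\alpha_i)$, so that repeated common roots are recovered with their proper orders.
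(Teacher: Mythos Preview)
The paper does not prove Proposition~\ref{thm:gcd}; it is quoted in the preliminaries as a classical fact about subresultants, so there is no ``paper's own proof'' to compare against. What follows is therefore an assessment of your argument on its own merits.

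Your treatment of the case where $F$ is squarefree is correct and clean: the Laplace expansion along the top $n-k$ rows, combined with the observation that the columns indexed by common roots (and the last column) vanish in that block, gives $S_k\equiv 0$ for $k<d$ and identifies $S_d$ as a nonzero scalar times $\prod_{i=1}^{d}(x-\alpha_i)$.

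The gap is exactly where you flag it: the passage to non-squarefree $F$. The specialization argument as written cannot work, because the crucial condition $s_d\neq 0$ is \emph{open}, not closed, and does not survive limits. Zariski density of the squarefree locus only lets you transport polynomial \emph{identities}, so it yields $s_0=\cdots=s_{d-1}=0$ on the closed locus $\{\deg\gcd\ge d\}$, but says nothing about $s_d$. To close the gap via confluence you must actually compute the scalar. Writing $F=HF_1$, $G=HG_1$ with $H=\gcd(F,G)$ and $\gcd(F_1,G_1)=1$, the cross-Vandermonde factors in your formula for $S_d$ cancel against the factors $H(\alpha_i)$ hidden inside $G(\alpha_i)$ for $i>d$, and what remains (up to a unit and a power of $a_n$) is $\prod_{i>d}G_1(\alpha_i)$, the product taken over the roots of $F_1$ with multiplicity. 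This is essentially $\operatorname{Res}(F_1,G_1)$ up to normalization, and it is nonzero precisely because $\gcd(F_1,G_1)=1$. That computation is the missing step; once done, your confluence argument goes through. Alternatively, the standard route in the literature is to pass to the Sylvester determinantal form of $S_k$ (for which the equivalence with Definition~\ref{def:sres} is known) and invoke the classical subresultant chain theorem, avoiding roots altogether.
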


\section{Main Result}\label{sec:main_result}
To present the main theorem in this paper, we first generalize the determinental polynomial from the form in standard basis to that in general basis.

\begin{definition}[Generalized determinental polynomial]
Let $\boldsymbol{\omega}=(\omega_s,\ldots,\omega_1,\omega_0)^{T}$ be a {general basis} of $\mathbb{F}_s[x]$ and $\boldsymbol{M}\in\mathbb{F}^{(s-k)\times s}$ where $k<s$. Then the determinental polynomial of $\boldsymbol{M}$ associated with $\boldsymbol{\omega}$ is defined as
\[\operatorname*{detp}\nolimits_{\boldsymbol{\omega}}\boldsymbol{M}:=\sum_{i=0}^{k}|\hat{\boldsymbol{M}}_i|\cdot \omega_i\]
where $\hat{\boldsymbol{M}}_i$ is the submatrix of $\boldsymbol{M}$ consisting of the first $s-k-1$ columns and  the $(s-i)$-th column.
\end{definition}

\begin{theorem}[Main result]\label{thm:main}
Assume $\boldsymbol{\omega}=(\omega_n,\ldots,\omega_1,\omega_0)^{T}$ is a general basis of $\mathbb{F}_n[x]$. Let  $F$ and $G$ be as in \eqref{eq:F+G}
and $\boldsymbol{B}_{\boldsymbol{\omega},k}$
be the submatrix of the B\'ezout matrix of $F$ and $G$ in  $\boldsymbol{\omega}$ obtained by deleting the last $k$ rows. Then for $0\le k\le m$,
\begin{equation}\label{eq:Sk_in_thm}
S_k=c_{\boldsymbol{\omega}}\cdot\operatorname{detp}_{\boldsymbol{\omega}}\boldsymbol{B}_{\boldsymbol{\omega},k}
\end{equation}
where
$$c_{\boldsymbol{\omega}}=(-1)^{n-k\choose 2}a_n^{m-n}.$$ \end{theorem}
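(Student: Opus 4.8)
The plan is to prove the identity in two stages: first reduce the general-basis statement to the special case $\boldsymbol{\omega}=\boldsymbol{x}$ by exploiting the transition matrix $\boldsymbol{U}$, and then settle the standard-basis case through the root definition of $S_k$ and the Cayley-quotient characterization of the B\'ezout matrix. Since each $\omega_i$ is monic of degree $i$, the matrix $\boldsymbol{U}$ determined by $\bar{\boldsymbol{\omega}}(x)=\boldsymbol{U}\,\bar{\boldsymbol{x}}(x)$ is upper triangular with unit diagonal, so $\det\boldsymbol{U}=1$ and its leading $(n-k)\times(n-k)$ block $\boldsymbol{U}_{11}$ also has $\det\boldsymbol{U}_{11}=1$. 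Using $\boldsymbol{B}=\boldsymbol{U}^T\boldsymbol{B}_{\boldsymbol{\omega}}\boldsymbol{U}$ together with the fact that the first $n-k$ columns of $\boldsymbol{U}$ vanish below row $n-k$, I would first establish $\boldsymbol{B}_{k}=\boldsymbol{U}_{11}^T\,\boldsymbol{B}_{\boldsymbol{\omega},k}\,\boldsymbol{U}$, where $\boldsymbol{B}_{k}$ is the standard-basis analogue of $\boldsymbol{B}_{\boldsymbol{\omega},k}$.

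Next I would write each generalized determinantal polynomial as a single $(n-k)\times(n-k)$ determinant: by multilinearity in the last column, $\operatorname{detp}_{\boldsymbol{\omega}}\boldsymbol{B}_{\boldsymbol{\omega},k}=\det[\,\boldsymbol{B}_{\boldsymbol{\omega},k}\boldsymbol{\epsilon}_1\mid\cdots\mid\boldsymbol{B}_{\boldsymbol{\omega},k}\boldsymbol{\epsilon}_{n-k-1}\mid \boldsymbol{B}_{\boldsymbol{\omega},k}\,\boldsymbol{e}_{\boldsymbol{\omega}}(x)\,]$, with $\boldsymbol{\epsilon}_l$ the standard unit vectors and $\boldsymbol{e}_{\boldsymbol{\omega}}(x)=(0,\dots,0,\omega_k,\dots,\omega_0)^T$, and similarly in the standard basis. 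Replacing $\boldsymbol{e}_{\boldsymbol{\omega}}(x)$ by $\bar{\boldsymbol{\omega}}(x)$ costs nothing (the extra high-degree terms are column combinations of the first $n-k-1$ columns), and the column combinations introduced by $\boldsymbol{U}$ are, thanks to its unit-triangular shape, again elementary column operations. This yields the basis-invariance $\operatorname{detp}_{\boldsymbol{x}}\boldsymbol{B}_{k}=\operatorname{detp}_{\boldsymbol{\omega}}\boldsymbol{B}_{\boldsymbol{\omega},k}$. Because $S_k$ and $c_{\boldsymbol{\omega}}$ are themselves independent of $\boldsymbol{\omega}$, this reduces Theorem \ref{thm:main} to the case $\boldsymbol{\omega}=\boldsymbol{x}$.

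For the standard-basis core I would use the key root identity: setting $y=\alpha_j$ in $C(x,y)=\bar{\boldsymbol{x}}(x)^T\boldsymbol{B}\,\bar{\boldsymbol{x}}(y)$ and using $F(\alpha_j)=0$ gives $\bar{\boldsymbol{x}}(x)^T\boldsymbol{B}\,\bar{\boldsymbol{x}}(\alpha_j)=G(\alpha_j)F(x)/(x-\alpha_j)$, hence $\boldsymbol{B}\,\boldsymbol{V}(\boldsymbol{\alpha})=\boldsymbol{R}$, where the $j$-th column of $\boldsymbol{R}$ lists the standard-basis coefficients of $a_nG(\alpha_j)\prod_{i\neq j}(x-\alpha_i)$. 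Writing $\boldsymbol{T}=[\,\boldsymbol{\epsilon}_1\mid\cdots\mid\boldsymbol{\epsilon}_{n-k-1}\mid\bar{\boldsymbol{x}}(x)\,]$ and letting $\boldsymbol{R}'$ be the first $n-k$ rows of $\boldsymbol{R}$, the Schur-complement identity lets me avoid inverting $\boldsymbol{V}(\boldsymbol{\alpha})$ and package everything in one determinant,
\[
\begin{vmatrix} \boldsymbol{V}(\boldsymbol{\alpha}) & \boldsymbol{T} \\ -\boldsymbol{R}' & \boldsymbol{0} \end{vmatrix}=|\boldsymbol{V}(\boldsymbol{\alpha})|\cdot\det(\boldsymbol{B}_{k}\boldsymbol{T})=|\boldsymbol{V}(\boldsymbol{\alpha})|\cdot\operatorname{detp}_{\boldsymbol{x}}\boldsymbol{B}_{k}.
\]
A Laplace expansion along the $n-k-1$ unit columns of $\boldsymbol{T}$ collapses this to an $(n+1)\times(n+1)$ determinant whose power rows and symbolic column already coincide with those of the numerator matrix in \eqref{eq:sres}. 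The remaining rows coming from $\boldsymbol{R}'$ carry the Horner coefficients of $F(x)/(x-\alpha_j)$; by the recursion implied by $(x-\alpha_j)\prod_{i\neq j}(x-\alpha_i)=F(x)/a_n$, these rows are a Hankel, unit-triangular (up to the scalar $-a_n$) combination of the rows $\alpha_i^{t}G(\alpha_i)$, so that block reduces to the $G$-block of $M_k$ at the cost of the factor $(-a_n)^{n-k}$.

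The main obstacle is the sign-and-scalar bookkeeping in this last reduction: I must track the Laplace-expansion sign, the sign $(-1)^{(n-k)(k+1)}$ from permuting the $(n-k)$-row block past the $(k+1)$-row block, and the reversal sign $(-1)^{\binom{n-k}{2}}$ needed to reorder the $G$-block, and verify that together with $(-a_n)^{n-k}$ they combine to exactly $c/c_{\boldsymbol{\omega}}=(-1)^{k}a_n^{m-k}\big/\bigl((-1)^{\binom{n-k}{2}}a_n^{m-n}\bigr)$; the appearance of $a_n^{n-k}$ and of the binomial sign in my trial computation is what makes me confident the constant in the statement is correct. One final point to handle is that $\boldsymbol{V}(\boldsymbol{\alpha})$ is singular when $F$ is not squarefree, which invalidates the factorization $\boldsymbol{B}=\boldsymbol{R}\,\boldsymbol{V}(\boldsymbol{\alpha})^{-1}$; I would circumvent this exactly as in Remark \ref{rem:interpreting} by treating $\alpha_1,\dots,\alpha_n$ as indeterminates, establishing the identity as a polynomial identity where $|\boldsymbol{V}(\boldsymbol{\alpha})|\not\equiv 0$, and then specializing, both sides being polynomials in the roots and in the coefficients of $G$.
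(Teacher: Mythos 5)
Your proposal is correct, and it reaches Theorem \ref{thm:main} by a genuinely different decomposition than the paper's. The paper never reduces to the power basis: it carries the general basis through the whole argument, first rewriting the root formula \eqref{eq:sres} with $\omega_j(\alpha_i)$ in place of $\alpha_i^j$ (Lemma \ref{lem:std2nstd}), then combining the root identity for $\boldsymbol{B}_{\boldsymbol{\omega}}$ coming from the Cayley quotient (Lemma \ref{lem:upperpart}) with a bordering lemma for the matrix $\boldsymbol{X}_{\boldsymbol{\omega},k}$ (Lemma \ref{lem:lowerpart}) to obtain $S_k=c_{\boldsymbol{\omega}}\left|\begin{smallmatrix}\boldsymbol{B}_{\boldsymbol{\omega},k}\\ \boldsymbol{X}_{\boldsymbol{\omega},k}\end{smallmatrix}\right|$ (Lemma \ref{lem:root2coeffs}), and finally identifying that determinant with $\operatorname{detp}_{\boldsymbol{\omega}}\boldsymbol{B}_{\boldsymbol{\omega},k}$ (Lemma \ref{lem:det2detp}). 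You instead concentrate all basis dependence into one invariance statement, $\boldsymbol{B}_k=\boldsymbol{U}_{11}^T\boldsymbol{B}_{\boldsymbol{\omega},k}\boldsymbol{U}$ and hence $\operatorname{detp}_{\boldsymbol{x}}\boldsymbol{B}_k=\operatorname{detp}_{\boldsymbol{\omega}}\boldsymbol{B}_{\boldsymbol{\omega},k}$; both steps are sound, since the first $n-k$ rows of $\boldsymbol{U}^T$ are supported on the first $n-k$ columns, and the column operations induced by $\boldsymbol{U}$ and by trading your $\boldsymbol{e}_{\boldsymbol{\omega}}(x)$ for $\bar{\boldsymbol{\omega}}(x)$ only add multiples of earlier columns. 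Because $S_k$, $a_n$ (all $\omega_i$ are monic) and $c_{\boldsymbol{\omega}}$ are basis-independent, this legitimately reduces the theorem to the standard-basis case, i.e.\ to the theorem of Hou and Wang \cite{houwang2000}, which your last two paragraphs re-prove using the same Cayley-quotient root identity as the paper but packaged as a Schur complement rather than as an $\boldsymbol{X}_{\boldsymbol{\omega},k}$-bordering. What your route buys is modularity: the general-basis statement becomes a short corollary of the known power-basis result (you could simply cite \cite{houwang2000} and omit the standard-basis core entirely). What the paper's route buys is a self-contained development whose intermediate statements, all formulated in the general basis, are reusable on their own. Finally, the sign bookkeeping you flagged as the main obstacle does close: the Laplace extraction of the $n-k-1$ unit columns contributes $(-1)^{n(n-k-1)}$, and combining this with $(-a_n)^{n-k}$, the block swap $(-1)^{(k+1)(n-k)}$, the row reversal $(-1)^{\binom{n-k}{2}}$, and $c=(-1)^k a_n^{m-k}$, the total exponent of $-1$ multiplying $S_k$ is
\begin{equation*}
k+n(n-k-1)+(n-k)+(k+1)(n-k)+\binom{n-k}{2}\equiv\binom{n-k}{2}\pmod 2,
\end{equation*}
because $k+n(n-k-1)+(n-k)(k+2)=(n^2+n)-(k^2+k)$ is even, while the powers of $a_n$ combine to $a_n^{(m-k)-(n-k)}=a_n^{m-n}$; this is exactly $c_{\boldsymbol{\omega}}$, confirming the constant in the statement.
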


The determinental polynomial given by \eqref{eq:Sk_in_thm}
provides an equivalent expression for $S_k$ and thus the determinental polynomial herein shares Proposition \ref{thm:gcd}.

\begin{example}\label{ex}
Let
$\boldsymbol{\nu}=(\nu_{3},\nu_2,\nu_1,\nu_0)^T$
be the Newton basis associated with $\lambda=(1,0,2)$, i.e.,
\[
\begin{array}{llll}
\nu_0=1,&\nu_1=x-2,&\nu_2=x(x-2),&
\nu_3=x(x-1)(x-2).
\end{array}
\]
Given
\begin{align*}
F\left ( x \right ) &=a_{3} \nu_{3} +a_{2} \nu_{2} +a_{1} \nu_1+a_{0}\nu_0,\\
G\left ( x \right ) &= b_{2} \nu_{2} +b_{1}\nu_1+b_{0}\nu_0,
\end{align*}
where $a_3b_2\ne0$, we compute the first subresultant $S_1$ of $F$ and $G$ with respect to $x$. It is required that $S_1$ is expressed in the basis $\boldsymbol{\nu}$.

By Definition \ref{def:bez_mat}, the B\'ezout matrix of $F$ and $G$ in  $\boldsymbol{\nu}$ is
\[\boldsymbol{B}_{\boldsymbol{\nu}}=
\begin{bmatrix}
a_{3}b_{2}   & a_{3}b_{1}  & a_{3}b_{0} \\
a_{3}b_{1}& P_1 &P_2 \\
a_{3}b_{0} & P_2 &P_3
\end{bmatrix}\]
where
\begin{align*}
P_1&=a_{3}b_{0}-a_{3}b_{1}+a_{2}b_{1}-a_{1}b_{2},\\
P_2&=a_{2}b_{0}-a_{0}b_{2}+a_{3}b_{0},\\
P_3&=a_{1}b_{0}+2a_{2}b_{0}+2a_{3}b_{0}-a_{0}b_{1}-2a_{0}b_{2}.
\end{align*}
By Theorem \ref{thm:main},
the first subresultant of $F$ and $G$ in $\boldsymbol{\nu}$ is
\begin{equation}\label{ex:S1_detp}
S_1=c_{\boldsymbol{\nu}}\cdot\left(
\begin{vmatrix}
a_{3}b_{2}   & a_{3}b_{1} \\
a_{3}b_{1}& P_1
\end{vmatrix}\nu_1+
\begin{vmatrix}
a_{3}b_{2}   & a_{3}b_{0} \\
a_{3}b_{1}&P_2
\end{vmatrix}\nu_0\right)
\end{equation}
where
\[c_{\boldsymbol{\nu}}=(-1)^{3-1\choose 2}a_3^{  2-3}=-a_3^{  -1}.\]
The expansion of \eqref{ex:S1_detp} yields
\begin{align}
S_1=&\,\left( a_{{1}}b_{{2}}^{2}-a_{{2}}b_{{1}}b_{{2}}-a_{{3}}b_{{0
}}b_{{2}}+a_{{3}}b_{{1}}^{2}+a_{{3}}b_{{1}}b_{{2}} \right)
\nu_1+\left( a_{{0}}b_{{2}}^{2}-a_{{2}}b_{{0}}b_{{2}}+a_{{3}}b_{{0
}}b_{{1}}-a_{{3}}b_{{0}}b_{{2}} \right) \nu_0.\label{ex:S1_nu}
\end{align}
Next we will verify that
$S_1$ is the same as the first subresultant of $F$ and $G$
when converted into expressions in the standard basis.

Converting $F$ and $G$ into expressions in the standard basis $\boldsymbol{x}=(x^3,x^2,$ $x^1,x^0)^T$, we have
\begin{align*}
F&=a_{3}x^3 + \left ( a_{2} - 3a_{3} \right )x^2  + \left ( a_{1} - 2a_{2} + 2a_{3} \right )x  + \left (a_{0} - 2a_{1}\right),\\
G&=b_{2}x^2 + \left ( b_{1} - 2b_{2} \right )x  + b_{0} - 2b_{1}.
\end{align*}
Then the B\'ezout matrix of $F$ and $G$ with respect to $x$ in the basis $\boldsymbol{x}=(x^3,x^2,x^1,x^0)^{T}$ is
\[\boldsymbol{B}=\begin{bmatrix}
a_{3}b_{2}   & a_{3}b_{1}-2a_{3}b_{2}  & a_{3}b_{0}-2a_{3}b_{1} \\
a_{3}b_{1}-2a_{3}b_{2}& P_4 &P_{5} \\
a_{3}b_{0}-2a_{3}b_{1} & P_5 &P_6
\end{bmatrix} \]
where
\begin{align*}
P_4=&\,(a_{{2}}b_{{1}}-a_{{1}}b_{{2}})+a_{{3}}b_{{0}}-5\,a_{{3}}b_{{1}}+4\,a_{{3}}b_{{2}},\\
P_5=&\,(a_{2}b_{0}-a_{0}b_{2})+2(a_{1}b_{2}-a_{2}b_{1})-3a_{3}b_{0}+6a_{3}b_{1},\\
P_6=&\,(a_{1}b_{0}-a_{0}b_{1})+2(a_{0}b_{2}-a_{2}b_{0})+4(a_{2}b_{1}-a_{1}b_{2})+2a_{3}b_{0}-4a_{3}b_{1}.
\end{align*}
Thus
\begin{align}
S_1=&\,c_{\boldsymbol{x}}\left(
\begin{vmatrix}
a_{3}b_{2}   & a_{3}b_{1}-2a_{3}b_{2} \\
a_{3}b_{1}-2a_{3}b_{2}& P_4
\end{vmatrix}x^1+\begin{vmatrix}
a_{3}b_{2}   & a_{3}b_{0}-2a_{3}b_1 \\
a_{3}b_{1}-2a_{3}b_{2}&P_5
\end{vmatrix}x^0\right)\label{ex:S1_sb_detp}
\end{align}
where again $c_{\boldsymbol{x}}=  -a_3^{-1}$. The expansion of \eqref{ex:S1_sb_detp} yields
\begin{align*}
S_1=\,&\left( a_{{1}}b_{{2}}^{2}-a_{{2}}b_{{1}}b_{{2}}-a_{{3}}b_{{0
}}b_{{2}}+a_{{3}}b_{{1}}^{2}+a_{{3}}b_{{1}}b_{{2}} \right)
x+\\
& \left( a_{{0}}b_{{2}}^{2}-2\,a_{{1}}b_{{2}}^{2}-a_{{2}}b_
{{0}}b_{{2}}+2\,a_{{2}}b_{{1}}b_{{2}}+a_{{3}}b_{{0}}b_{{1}}+a_{{3}}b_{
{0}}b_{{2}}-2\,a_{{3}}b_{{1}}^{2}-2\,a_{{3}}b_{{1}}b_{{2}} \right),
\end{align*}
which is exactly $S_1$ in \eqref{ex:S1_nu} when reformulated in standard basis.
\end{example}

\section{Proof of the Main Result}\label{sec:proofs}

Before going into the details of the proof, we give a brief sketch of  the proof.
In the first stage, we convert $S_k$  in terms of roots from power basis to general basis.  Then $S_k$ in general basis is converted from a polynomial in roots  to a determinant in coefficients. Finally, we show that the determinant in coefficients can be written as a generalized determinental polynomial given by Theorem \ref{thm:main}.

\subsection{Converting $S_k$ in roots from standard basis to general basis}

It is obvious that the subresultant $S_k$ of $F$ and $G$ with respect to $x$ given by \eqref{eq:sres} does not depend on the basis that $F$ and $G$ are expressed with. However, with a closer look at the expression in  \eqref{eq:sres}, one may notice that the standard basis actually appears in $S_k$.
More explicitly, the entries of determinants in $S_k$ are of the form $x^i~(0\le i\le k)$ or the evaluation of $x^i~(0\le i\le k)$ and $x^ig(x)~(0\le i\le n-k-1)$ at the roots of $F$. In this sense, we can say that $S_k$ is expressed in standard basis. In this subsection, we will convert $S_k$ in terms of roots from standard basis to general basis.
For simplicity, we introduce the following  notations.

\begin{notation}\ 
\begin{itemize}
\item $\boldsymbol{\omega}(x):=(\omega_n,\ldots,\omega_{1},\omega_0)^T$ is  a general basis of $\mathbb{F}_n[x];$

\item $\boldsymbol{\omega}_j(x):=(\omega_j,\ldots,\omega_1,\omega_0)^T$ for $0\le j<n$;

\item $\boldsymbol{W}(\boldsymbol{\alpha}):=\begin{bmatrix}
\omega_{n-1}(\alpha_1)&\cdots&\omega_{n-1}(\alpha_n)\\
\vdots&\ddots&\vdots\\
\omega_{0}(\alpha_1)&\cdots&\omega_{0}(\alpha_n)
\end{bmatrix}$.
\end{itemize}
\end{notation}

\begin{lemma}\label{lem:std2nstd}
Let  $F,G\in\mathbb{F}[x]$ be such that
$\deg F=n$. Then for $k\le n$, we have
\begin{equation}\label{eq:sres_nstd_basis}
S_k=\frac{{c}\cdot\left|
\begin{array}{lclc}
\boldsymbol{\omega}_{n-k-1}(\alpha_1)G(\alpha_1) & \cdots & \boldsymbol{\omega}_{n-k-1}(\alpha_n)G(\alpha_n) &\\
\boldsymbol{\omega}_{k}(\alpha_1) & \cdots & \boldsymbol{\omega}_{k}(\alpha_n)&\boldsymbol{\omega}_{k}
\end{array}\right|}{
|\boldsymbol{W}(\boldsymbol{\alpha})|}
\end{equation}
where
$c$ is as in Definition \ref{def:sres}, i.e, $c=(-1)^ka_n^{m-k}$.
\end{lemma}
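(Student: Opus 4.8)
The plan is to exploit the defining property of a general basis from Definition~\ref{def:general_basis}: since each $\omega_j$ is monic of degree $j$, we may write $\omega_j = x^j + \sum_{l=0}^{j-1} u_{jl}\,x^l$ for suitable constants $u_{jl}\in\mathbb{F}$, with $\omega_0=1$. The expressions \eqref{eq:sres} and \eqref{eq:sres_nstd_basis} for $S_k$ carry the same scalar $c$ and differ only in that every standard-basis entry $x^j$, together with its evaluations $\alpha_i^j$, is replaced by the general-basis entry $\omega_j$ and its evaluations $\omega_j(\alpha_i)$. I would therefore show that this replacement leaves both the numerator determinant and the denominator determinant unchanged, so that the quotient $S_k$ is unchanged.

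Following Remark~\ref{rem:interpreting}, I would treat $\alpha_1,\dots,\alpha_n$ as indeterminates, so every identity below is a polynomial identity that survives evaluation at the roots of $F$. For the denominator, the rows of $\boldsymbol{W}(\boldsymbol{\alpha})$ carry $\omega_{n-1},\dots,\omega_0$ evaluated at the $\alpha_i$, a complete consecutive range of degrees down to $0$. Writing each row as $[\omega_j(\alpha_1),\dots,\omega_j(\alpha_n)] = [\alpha_1^j,\dots,\alpha_n^j] + \sum_{l<j} u_{jl}\,[\alpha_1^l,\dots,\alpha_n^l]$, the matrix of $\omega$-rows equals a unit-triangular matrix times the Vandermonde matrix of power-rows; since the unit-triangular factor has determinant $1$, I obtain $|\boldsymbol{W}(\boldsymbol{\alpha})| = |\boldsymbol{V}(\boldsymbol{\alpha})|$.

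For the numerator I would argue blockwise. The top block spans the degrees $n-k-1,\dots,0$ (each row multiplied by the factors $G(\alpha_i)$) and the bottom block spans the degrees $k,\dots,0$; each is a complete consecutive range down to $0$. Within the top block, subtracting the lower-degree rows as above converts every $\omega_j(\alpha_i)G(\alpha_i)$ row into the corresponding $\alpha_i^jG(\alpha_i)$ row, leaving the determinant invariant; within the bottom block the same operation applies simultaneously to the root columns and to the final column, since $\omega_j = x^j + \sum_{l<j}u_{jl}x^l$ holds identically there as well. Because the two blocks occupy disjoint rows, the operations do not interfere, so the numerator determinant in general basis equals the one in standard basis. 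Combining the numerator and denominator identities with the common factor $c$ yields \eqref{eq:sres_nstd_basis}.

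The main obstacle is bookkeeping rather than depth: I must verify that each block is a complete consecutive range of degrees starting at $0$, so that every lower-degree row needed in the reduction $\tilde{R}_j = R_j + \sum_{l<j} u_{jl}R_l$ is actually present within the same block, and that the extra last column in the bottom block transforms consistently with the root columns. Provided these completeness checks hold, all the reductions are genuine unit-triangular row operations, the two determinants are preserved exactly, and the lemma follows.
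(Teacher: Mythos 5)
Your proposal is correct and takes essentially the same route as the paper: your unit-triangular row reductions are exactly what the paper expresses as left-multiplication by the unit upper triangular transition matrix $\boldsymbol{U}$ (for the denominator) and by the block-diagonal matrix with blocks $\boldsymbol{U}_{n-k}$ and $\boldsymbol{U}_{k+1}$ (for the numerator), concluding from $|\boldsymbol{U}|=|\boldsymbol{U}_{n-k}|=|\boldsymbol{U}_{k+1}|=1$. Your completeness check that each block spans a full consecutive range of degrees down to $0$ (with the symbolic column $\boldsymbol{\omega}_k$ transforming identically) is precisely what makes those submatrices genuine transition matrices, so the two arguments coincide.
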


\begin{proof}
We will prove the lemma by comparing the denominators  and numerators in  \eqref{eq:sres} and \eqref{eq:sres_nstd_basis} and showing their equivalence, respectively.
The proof is a bit long and thus we divide it into several steps.

\smallskip
\begin{enumerate}[(i)]
\item  Show the equivalence of denominators in  \eqref{eq:sres} and \eqref{eq:sres_nstd_basis}.
\begin{enumerate}[(a)]
\item Let $\boldsymbol{U}$ be the transition matrix from $\bar{\boldsymbol{x}}=(x^{n-1},\ldots,x^0)^{T}$ to $\bar{\boldsymbol{\omega}}=(\omega_{n-1},\ldots,\omega_0)^{T}$, i.e., $\bar{\boldsymbol{x}}=\boldsymbol{U\bar{\omega}}$.

Thus for $i=1,\ldots,n$,
$\bar{\boldsymbol{x}}(\alpha_i)=\boldsymbol{U}\bar{\boldsymbol{\omega}}(\alpha_i)$. More explicitly, we have
\[\boldsymbol{U}=\begin{bmatrix}
1&\cdots&\cdot\\[-3pt]
&\ddots&\vdots\\[-3pt]
&&1
\end{bmatrix}.\]
In other words, $U$ is a unit upper triangular matrix.

\item It follows that
\begin{align*}
\boldsymbol{V}(\boldsymbol{\alpha})=&
\begin{bmatrix}
\bar{\boldsymbol{x}}(\alpha_1)&\cdots&\bar{\boldsymbol{x}}(\alpha_n)
\end{bmatrix}\\
=&\begin{bmatrix}
\boldsymbol{U}\bar{\boldsymbol{\omega}}(\alpha_1)&\cdots&\boldsymbol{U}\bar{\boldsymbol{\omega}}(\alpha_n)
\end{bmatrix}\\
=&\,\boldsymbol{U}\begin{bmatrix}
\bar{\boldsymbol{\omega}}(\alpha_1)&\cdots&\bar{\boldsymbol{\omega}}(\alpha_n)
\end{bmatrix}\\
=&\,\boldsymbol{U}\boldsymbol{W}(\boldsymbol{\alpha})
\end{align*}

\item Taking determinants on both sides of the above equation, we have
$
|\boldsymbol{V}(\boldsymbol{\alpha})|=|\boldsymbol{U}|\cdot |\boldsymbol{W}(\boldsymbol{\alpha})|.
$

\item Since $\boldsymbol{U}$ is unit upper triangular, $|\boldsymbol{U}|=1$, which implies that
\begin{equation}\label{eq:V_W}
|\boldsymbol{V}(\boldsymbol{\alpha})|= |\boldsymbol{W}(\boldsymbol{\alpha})|.
\end{equation}
\end{enumerate}

\item  Show the equivalence of numerators in   \eqref{eq:sres} and \eqref{eq:sres_nstd_basis}.
\begin{enumerate}[(a)]
\item  Let $\boldsymbol{U}_j$ be the submatrix of $\boldsymbol{U}$ obtained by selecting its last $j$ rows and the last $j$ columns. Then
$\boldsymbol{U}_j$ is the transition matrix from the basis $\boldsymbol{x}_{j-1}$ of $\mathbb{F}_{j-1}[x]$ to $\boldsymbol{\omega}_{j-1}(x)$, i.e.,
$\boldsymbol{x}_{j-1}(x)=\boldsymbol{U}_{j}\boldsymbol{\omega}_{j-1}(x)$.

\item  It follows that
\begin{align*}
&\,\left[
\begin{array}{cccc}
\boldsymbol{x}_k(\alpha_1) & \cdots & \boldsymbol{x}_k(\alpha_n) &\boldsymbol{x}_k
\end{array}\right]\\
=&\,\left[
\begin{array}{cccc}
\boldsymbol{U}_{k+1}\boldsymbol{\omega}_{k}(\alpha_1) & \cdots & \boldsymbol{U}_{k+1}\boldsymbol{\omega}_{k}(\alpha_n) &\boldsymbol{U}_{k+1}\boldsymbol{\omega}_{k}(x)
\end{array}
\right]\\
=&\,\boldsymbol{U}_{k+1}
\left[
\begin{array}{cccc}
\boldsymbol{\omega}_{k}(\alpha_1) & \cdots & \boldsymbol{\omega}_{k}(\alpha_n) &\boldsymbol{\omega}_{k}(x)
\end{array}
\right].
\end{align*}

\item  With the same manner, we derive the following:
\begin{align*}
&
\left[
\begin{array}{cccc}
\boldsymbol{\alpha}_{1,n-k-1}G(\alpha_1) & \cdots & \boldsymbol{\alpha}_{n,n-k-1}G(\alpha_n)
\end{array}
\right]\\
=\,&\boldsymbol{U}_{n-k}
\left[
\begin{array}{cccc}
\boldsymbol{\omega}_{n-k-1}(\alpha_1)G(\alpha_1) & \cdots & \boldsymbol{\omega}_{n-k-1}(\alpha_n)G(\alpha_n) &\boldsymbol{0}_{(n-k)\times 1}\end{array}
\right].
\end{align*}

\item  Assembling the obtained expressions in (ii.b) and (ii.c), we obtain
\begin{align*}
&\left[
\begin{array}{lclc}
\boldsymbol{\alpha}_{1,n-k-1}G(\alpha_1)&\cdots&\boldsymbol{\alpha}_{n,n-k-1}G(\alpha_n)&\\
\boldsymbol{\alpha}_{1,k}&\cdots&\boldsymbol{\alpha}_{n,k}&\boldsymbol{x}_k
\end{array}
\right]\\[3pt]
=&
\left[
\begin{array}{l}
\boldsymbol{U}_{n-k}\!\!\left[
\begin{array}{lcll}
\boldsymbol{\omega}_{n-k-1}(\alpha_1)G(\alpha_1) & \cdots & \boldsymbol{\omega}_{n-k-1}(\alpha_n)G(\alpha_n) &\hspace{3.3em}
\end{array}\right]\\[5pt]
\boldsymbol{U}_{k+1}\!
\left[
\begin{array}{lcll}
\boldsymbol{~\omega}_{k}(\alpha_1)~ & \hspace{4.7em}\cdots & \boldsymbol{\omega}_{k}(\alpha_n) &\hspace{5.em}\boldsymbol{\omega}_{k}(x)
\end{array}\right]
\end{array}\right]\\[3pt]
=&
\begin{bmatrix}
\boldsymbol{U}_{n-k}&\\
&\boldsymbol{U}_{k+1}
\end{bmatrix}\cdot
\left[\begin{array}{lcll}
\boldsymbol{\omega}_{n-k-1}(\alpha_1)G(\alpha_1) & \cdots & \boldsymbol{\omega}_{n-k-1}(\alpha_n)G(\alpha_n) &\hspace{1.7em}\\
\boldsymbol{~\omega}_{k}(\alpha_1)~ & \cdots & \boldsymbol{\omega}_{k}(\alpha_n) &\boldsymbol{\omega}_{k}(x)
\end{array}\right].
\end{align*}

\item  Taking determinants on both sides yields
\begin{align*}
&\left|
\begin{array}{lcll}\boldsymbol{\alpha}_{1,n-k-1}G(\alpha_1)&\cdots&\boldsymbol{\alpha}_{n,n-k-1}G(\alpha_n)&\\
\boldsymbol{\alpha}_{1,k}&\cdots&\boldsymbol{\alpha}_{n,k}&\boldsymbol{x}_k
\end{array}\right|\\[3pt]
=&
\begin{vmatrix}
\boldsymbol{U}_{n-k}&\\
&\boldsymbol{U}_{k+1}
\end{vmatrix}\cdot
\left|
\begin{array}{lcll}
\boldsymbol{\omega}_{n-k-1}(\alpha_1)G(\alpha_1) & \cdots & \boldsymbol{\omega}_{n-k-1}(\alpha_n)G(\alpha_n) &\hspace{1.7em}\\
\boldsymbol{~\omega}_{k}(\alpha_1)~ & \cdots & \boldsymbol{\omega}_{k}(\alpha_n) &\boldsymbol{\omega}_{k}(x)
\end{array}\right|
.
\end{align*}

\item Since $\boldsymbol{U}$ is unit upper triangular, so are $\boldsymbol{U}_{n-k}$ and $\boldsymbol{U}_{k+1}$. Hence $|\boldsymbol{U}_{n-k}|=|\boldsymbol{U}_{k+1}|=1$, which  implies that
\begin{align}
&\left|
\begin{array}{lcll}\boldsymbol{\alpha}_{1,n-k-1}G(\alpha_1)&\cdots&\boldsymbol{\alpha}_{n,n-k-1}G(\alpha_n)&\\
\boldsymbol{\alpha}_{1,k}&\cdots&\boldsymbol{\alpha}_{n,k}&\boldsymbol{x}_k
\end{array}\right|\notag\\[3pt]
=&
\left|
\begin{array}{lcll}
\boldsymbol{\omega}_{n-k-1}(\alpha_1)G(\alpha_1) & \cdots & \boldsymbol{\omega}_{n-k-1}(\alpha_n)G(\alpha_n) &\hspace{1.7em}\\
\boldsymbol{~\omega}_{k}(\alpha_1)~ & \cdots & \boldsymbol{\omega}_{k}(\alpha_n) &\boldsymbol{\omega}_{k}(x)
\end{array}\right|
.\label{eq:numer_equiv}
\end{align}
\end{enumerate}

\item  Combining \eqref{eq:V_W} and \eqref{eq:numer_equiv}, we obtain
\begin{align*}
S_k=&\frac{c\cdot \left|\begin{array}{lcll}
\boldsymbol{\omega}_{n-k-1}(\alpha_1)G(\alpha_1) & \cdots & \boldsymbol{\omega}_{n-k-1}(\alpha_n)G(\alpha_n) &\hspace{1.7em}\\
\boldsymbol{~\omega}_{k}(\alpha_1)~ & \cdots & \boldsymbol{\omega}_{k}(\alpha_n) &\boldsymbol{\omega}_{k}(x)
\end{array}\right|}{|\boldsymbol{W}(\boldsymbol{\alpha})|}
\end{align*}
where $c=(-1)^ka_n^{m-k}$.
\end{enumerate}
\end{proof}

\begin{remark}

The expression for $S_k$ in \eqref{eq:sres_nstd_basis} should  be interpreted in the same manner as stated in Remark \ref{rem:interpreting}, otherwise the denominator will vanish when $F$ is not squarefree.
\end{remark}

\subsection{Converting $S_k$ in general basis from an expression in roots to that in coefficients}

The following Lemma \ref{lem:root2coeffs} allows us to convert $S_k$ in roots to an expression in coefficients of the input polynomials formulated with general basis. It generalizes the ides introduced by Li for subresultant polynomials in standard basis \cite{li2006}.

\begin{lemma}\label{lem:root2coeffs}
Given a general basis $\boldsymbol{\omega}=(\omega_{n},\ldots,\omega_1,\omega_0)^T$ of $\mathbb{F}_n[x]$ and $F$, $G$ as in \eqref{eq:F+G},
let $\boldsymbol{B_\omega}$ be the B\'ezout matrix of $F$ and $G$ with respect to $x$ in  $\boldsymbol{\omega}$ and $\boldsymbol{B}_{\boldsymbol{\omega},k}$ be the submatrix of  $\boldsymbol{B_\omega}$ by deleting the last $k$ rows. Then
$$S_k=c_{\boldsymbol{\omega}}\cdot
\begin{vmatrix}
\boldsymbol{B}_{\boldsymbol{\omega},k}\\
\boldsymbol{X}_{\boldsymbol{\omega},k}
\end{vmatrix}$$
where
\begin{itemize}
\item
$c_{\boldsymbol{\omega}}=(-1)^{n-k\choose 2}a_n^{m-n}$,

\item and \begin{equation}\label{eq:Xw}
\boldsymbol{X}_{\boldsymbol{\omega},k}
=\begin{bmatrix}
&&-1&&&\omega_k/\omega_0\\
&&&\ddots&&\vdots\\
&&&&-1&\omega_1/\omega_0
\end{bmatrix}_{k\times n}.
\end{equation}
\end{itemize}
\end{lemma}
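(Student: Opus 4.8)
The plan is to adapt Li's method \cite{li2006} to the general basis by multiplying everything on the right by $\boldsymbol{W}(\boldsymbol{\alpha})$, which turns evaluations at the roots of $F$ into coordinate vectors via the B\'ezout matrix. Write $\boldsymbol{M}=\begin{pmatrix}\boldsymbol{B}_{\boldsymbol{\omega},k}\\ \boldsymbol{X}_{\boldsymbol{\omega},k}\end{pmatrix}$ and let $N$ be the numerator determinant in \eqref{eq:sres_nstd_basis}. Since $\det(\boldsymbol{M}\,\boldsymbol{W}(\boldsymbol{\alpha}))=\det\boldsymbol{M}\cdot|\boldsymbol{W}(\boldsymbol{\alpha})|$, by Lemma~\ref{lem:std2nstd} and \eqref{eq:V_W} the claim reduces to the single polynomial identity $c_{\boldsymbol{\omega}}\det(\boldsymbol{M}\,\boldsymbol{W}(\boldsymbol{\alpha}))=c\,|N|$. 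The driving relation comes from specializing Definition~\ref{def:bez_mat} at a root $\alpha_j$: because $F(\alpha_j)=0$,
\[
\bar{\boldsymbol{\omega}}(x)^{T}\boldsymbol{B_\omega}\,\bar{\boldsymbol{\omega}}(\alpha_j)=C(x,\alpha_j)=\frac{F(x)}{x-\alpha_j}\,G(\alpha_j),
\]
so, $\bar{\boldsymbol{\omega}}(x)$ being a basis, $\boldsymbol{B_\omega}\,\bar{\boldsymbol{\omega}}(\alpha_j)=G(\alpha_j)\,\boldsymbol{h}_j$, where $\boldsymbol{h}_j$ is the coordinate vector of $F(x)/(x-\alpha_j)$ in $\bar{\boldsymbol{\omega}}$. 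Carrying out this division in the $\boldsymbol{\omega}$-basis by a Horner-type recurrence shows each coordinate is a fixed polynomial in $\alpha_j$, the one of $\omega_i$ having degree $n-1-i$; hence $\boldsymbol{h}_j=\boldsymbol{A_\omega}\,\bar{\boldsymbol{\omega}}(\alpha_j)$ for one matrix $\boldsymbol{A_\omega}$ independent of $j$, giving $\boldsymbol{B_\omega}\,\boldsymbol{W}(\boldsymbol{\alpha})=\boldsymbol{A_\omega}\,\boldsymbol{W}(\boldsymbol{\alpha})\,\boldsymbol{D}_G$ with $\boldsymbol{D}_G=\mathrm{diag}(G(\alpha_1),\dots,G(\alpha_n))$. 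The degree bound forces $\boldsymbol{A_\omega}$ to vanish strictly above its anti-diagonal (entries $(p,q)$ with $p+q\le n$) and to carry $a_n$ along the anti-diagonal.

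I would then compute $\boldsymbol{X}_{\boldsymbol{\omega},k}\,\boldsymbol{W}(\boldsymbol{\alpha})$ directly; using $\omega_0=1$, its $(r,j)$ entry is $\omega_{k+1-r}(x)-\omega_{k+1-r}(\alpha_j)$. Next I massage $|N|$: subtracting the polynomial last column from each of the $n$ evaluation columns (an elementary, determinant-preserving operation) turns the bottom block into the entries $\omega_{k+1-r}(\alpha_j)-\omega_{k+1-r}(x)$ and makes the bottom row equal to $(0,\dots,0,1)$; expanding along that row collapses $|N|$ to an $n\times n$ determinant $|P|$ whose top $n-k$ rows are the evaluation rows $[\omega_\ell(\alpha_j)G(\alpha_j)]$ and whose bottom $k$ rows are $-(\omega_{k+1-r}(x)-\omega_{k+1-r}(\alpha_j))$.

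Comparing $\boldsymbol{M}\,\boldsymbol{W}(\boldsymbol{\alpha})$ with $P$ block by block finishes the computation. The vanishing of $\boldsymbol{A_\omega}$ above its anti-diagonal makes the top $n-k$ rows of $\boldsymbol{B}_{\boldsymbol{\omega},k}\,\boldsymbol{W}(\boldsymbol{\alpha})=\boldsymbol{A_\omega}\,\boldsymbol{W}(\boldsymbol{\alpha})\,\boldsymbol{D}_G$ a combination of exactly the evaluation rows $[\omega_\ell(\alpha_j)G(\alpha_j)]$ ($0\le \ell\le n-k-1$) appearing in $P$, the transition being the $(n-k)\times(n-k)$ corner $\hat{\boldsymbol{A}}$ of $\boldsymbol{A_\omega}$ (rows $1,\dots,n-k$ and columns $k+1,\dots,n$); while $\boldsymbol{X}_{\boldsymbol{\omega},k}\,\boldsymbol{W}(\boldsymbol{\alpha})$ is exactly $-1$ times the bottom block of $P$. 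Thus $\boldsymbol{M}\,\boldsymbol{W}(\boldsymbol{\alpha})=\begin{pmatrix}\hat{\boldsymbol{A}}&\boldsymbol{0}\\ \boldsymbol{0}&-\boldsymbol{I}_k\end{pmatrix}P$, and since $\det\hat{\boldsymbol{A}}=(-1)^{\binom{n-k}{2}}a_n^{\,n-k}$ (signed product of its anti-diagonal),
\[
\det(\boldsymbol{M}\,\boldsymbol{W}(\boldsymbol{\alpha}))=(-1)^{\binom{n-k}{2}}a_n^{\,n-k}(-1)^{k}\,|N|.
\]
The required identity then follows from the bookkeeping $(-1)^{\binom{n-k}{2}}a_n^{m-n}\cdot(-1)^{\binom{n-k}{2}}a_n^{\,n-k}(-1)^{k}=(-1)^{k}a_n^{m-k}=c$.

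The main obstacle is the driving relation $\boldsymbol{B_\omega}\,\boldsymbol{W}(\boldsymbol{\alpha})=\boldsymbol{A_\omega}\,\boldsymbol{W}(\boldsymbol{\alpha})\,\boldsymbol{D}_G$: one must verify that dividing $F$ by $x-\alpha_j$ in the general basis is realized by a single matrix $\boldsymbol{A_\omega}$, and pin down its anti-triangular shape and the constant anti-diagonal value $a_n$, since everything downstream (the corner $\hat{\boldsymbol{A}}$ and its determinant) rests on this. A secondary point is that passing from $\det\boldsymbol{M}$ to $\det(\boldsymbol{M}\,\boldsymbol{W}(\boldsymbol{\alpha}))/|\boldsymbol{W}(\boldsymbol{\alpha})|$ assumes $\boldsymbol{W}(\boldsymbol{\alpha})$ invertible, i.e. $F$ squarefree; the repeated-root case is handled by treating $\alpha_1,\dots,\alpha_n$ as indeterminates and invoking the symmetric-function interpretation of Remark~\ref{rem:interpreting}, both sides being polynomial in the data.
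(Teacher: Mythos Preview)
Your proposal is correct and follows essentially the same strategy as the paper: multiply on the right by $\boldsymbol{W}(\boldsymbol{\alpha})$, use the specialization $C(x,\alpha_j)=\dfrac{F(x)}{x-\alpha_j}G(\alpha_j)$ to relate $\boldsymbol{B}_{\boldsymbol{\omega},k}\boldsymbol{W}(\boldsymbol{\alpha})$ to the evaluation rows $[\omega_\ell(\alpha_j)G(\alpha_j)]$ via an anti-lower-triangular matrix with anti-diagonal $a_n$, and handle the $\boldsymbol{X}_{\boldsymbol{\omega},k}$ block by elementary determinant operations before collecting the scalar factor. The only notable difference is in the justification of your $\boldsymbol{A_\omega}$: the paper writes it explicitly as $a_n\boldsymbol{U}^T\boldsymbol{T}\boldsymbol{U}$ (with $\boldsymbol{T}$ built from the elementary symmetric functions $e_j$ of all roots, via the identity $e_j^{(i)}=\sum_{\ell}(-1)^\ell e_{j-\ell}\alpha_i^\ell$) and reads off the anti-triangular shape from the triangularity of $\boldsymbol{U}$ and $\boldsymbol{T}$, whereas you obtain the same shape and anti-diagonal value $a_n$ by a Horner/degree argument; both are valid and your route is slightly more self-contained.
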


Before  proving Lemma \ref{lem:root2coeffs}, we need to verify the following Lemmas \ref{lem:upperpart} and \ref{lem:lowerpart} first, which captures the essential ingredients of Lemma \ref{lem:root2coeffs}.

\begin{lemma}\label{lem:upperpart}
Given $F$, $G$ as in \eqref{eq:F+G} with  $\boldsymbol{\omega}=(\omega_{n},\ldots,\omega_{1},\omega_0)^{T}$ to be a general basis  of $\mathbb{F}_n[x]$,
let
\begin{itemize}
\item
$\boldsymbol{B_\omega}$ be the B\'ezout matrix of $F$ and $G$ with respect to $x$ in $\boldsymbol{\omega}$,

\item $\bar{\boldsymbol{\omega}}=(\omega_{n-1},\ldots,\omega_1,\omega_0)^{T}$, and
\item
$\alpha_1,\ldots,\alpha_n$ be the $n$ roots of $F$ over $\overline{\mathbb{F}}$.
\end{itemize}
Then
\begin{align*}
\boldsymbol{B_\omega}\cdot
\begin{bmatrix}
\bar{\boldsymbol{\omega}}(\alpha_1)&\cdots&\bar{\boldsymbol{\omega}}(\alpha_n)
\end{bmatrix}
=a_{n}\boldsymbol{U}^T\boldsymbol{TU}
\begin{bmatrix}
\bar{\boldsymbol{\omega}}(\alpha_1)G(\alpha_1)&\cdots&\bar{\boldsymbol{\omega}}(\alpha_n)G(\alpha_n)
\end{bmatrix}
\end{align*}
where $\boldsymbol{U}$ is the transition matrix from $\bar{\boldsymbol{x}}$ to $\boldsymbol{\bar{\omega}}$ and
\[
\boldsymbol{T}=\begin{bmatrix}
&&&(-1)^{0}e_0\\
&&(-1)^{2}e_0&(-1)^{1}e_1\\
&\begin{sideways}
$\ddots$
\end{sideways}&\vdots&\vdots\\
(-1)^{2(n-1)}e_0&\cdots&\cdots&(-1)^{n-1}e_{n-1}
\end{bmatrix}.
\]
\end{lemma}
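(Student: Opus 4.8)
The plan is to reduce the whole statement to the behaviour of the Cayley quotient at the roots of $F$, and then to a purely combinatorial identity about the coefficients of $F(y)/(y-\alpha_i)$, which is exactly what produces the matrix $\boldsymbol{T}$. First I would record the two structural facts about $\boldsymbol{B_\omega}$ that I need. The matrix $\boldsymbol{B_\omega}$ is \emph{symmetric}: this is immediate from $C(x,y)=C(y,x)$, since interchanging $x$ and $y$ negates both the $2\times2$ determinant and the factor $x-y$, so their ratio is unchanged, and then $\bar{\boldsymbol{\omega}}(x)^T\boldsymbol{B_\omega}\bar{\boldsymbol{\omega}}(y)=\bar{\boldsymbol{\omega}}(x)^T\boldsymbol{B_\omega}^T\bar{\boldsymbol{\omega}}(y)$ for all $x,y$ forces $\boldsymbol{B_\omega}=\boldsymbol{B_\omega}^T$. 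Moreover, for a root $\alpha_i$ of $F$ one has $C(\alpha_i,y)=\frac{-F(y)G(\alpha_i)}{\alpha_i-y}=G(\alpha_i)\,\frac{F(y)}{y-\alpha_i}$ because $F(\alpha_i)=0$. Here $F(y)/(y-\alpha_i)=a_n\prod_{j\ne i}(y-\alpha_j)$ is a genuine polynomial of degree $n-1$, so, treating the $\alpha_j$ as indeterminates in the sense of Remark \ref{rem:interpreting}, every identity below is a polynomial identity in $\mathbb{F}[\alpha_1,\dots,\alpha_n]$ and may be specialized afterwards.

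Next I would extract a column identity. Writing $F(y)/(y-\alpha_i)=\boldsymbol{c}_i^{T}\bar{\boldsymbol{\omega}}(y)$ with $\boldsymbol{c}_i$ its coefficient column in the general basis, and substituting into the defining relation $C(\alpha_i,y)=\bar{\boldsymbol{\omega}}(\alpha_i)^T\boldsymbol{B_\omega}\bar{\boldsymbol{\omega}}(y)$, I compare coefficients: since $\bar{\boldsymbol{\omega}}(y)$ ranges over a basis as $y$ varies, this gives $\bar{\boldsymbol{\omega}}(\alpha_i)^T\boldsymbol{B_\omega}=G(\alpha_i)\boldsymbol{c}_i^{T}$, and transposing together with the symmetry of $\boldsymbol{B_\omega}$ yields $\boldsymbol{B_\omega}\bar{\boldsymbol{\omega}}(\alpha_i)=G(\alpha_i)\boldsymbol{c}_i$. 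Collecting the $n$ columns, $\boldsymbol{B_\omega}\boldsymbol{W}(\boldsymbol{\alpha})=[\,\boldsymbol{c}_1\ \cdots\ \boldsymbol{c}_n\,]\cdot\operatorname{diag}(G(\alpha_1),\dots,G(\alpha_n))$, and the same diagonal factor is exactly what converts $\boldsymbol{W}(\boldsymbol{\alpha})$ into the matrix $[\,\bar{\boldsymbol{\omega}}(\alpha_1)G(\alpha_1)\ \cdots\ \bar{\boldsymbol{\omega}}(\alpha_n)G(\alpha_n)\,]$ appearing on the right-hand side of the lemma. Hence the lemma is equivalent to the basis-side identity $[\,\boldsymbol{c}_1\ \cdots\ \boldsymbol{c}_n\,]=a_n\,\boldsymbol{U}^T\boldsymbol{T}\boldsymbol{U}\,\boldsymbol{W}(\boldsymbol{\alpha})$, in which $G$ no longer appears.

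Now I would pass to the standard basis. Using $\bar{\boldsymbol{x}}=\boldsymbol{U}\bar{\boldsymbol{\omega}}$ from Lemma \ref{lem:std2nstd}, if $\boldsymbol{d}_i$ denotes the coefficient column of $F(y)/(y-\alpha_i)$ in the standard basis, then $\boldsymbol{c}_i^{T}\bar{\boldsymbol{\omega}}(y)=\boldsymbol{d}_i^{T}\bar{\boldsymbol{x}}(y)=\boldsymbol{d}_i^{T}\boldsymbol{U}\bar{\boldsymbol{\omega}}(y)$ gives $\boldsymbol{c}_i=\boldsymbol{U}^T\boldsymbol{d}_i$, while $\bar{\boldsymbol{x}}(\alpha_i)=\boldsymbol{U}\bar{\boldsymbol{\omega}}(\alpha_i)$. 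Consequently it suffices to prove the single standard-basis column statement $\boldsymbol{d}_i=a_n\,\boldsymbol{T}\,\bar{\boldsymbol{x}}(\alpha_i)$: granting it, $\boldsymbol{c}_i=\boldsymbol{U}^T\boldsymbol{d}_i=a_n\boldsymbol{U}^T\boldsymbol{T}\bar{\boldsymbol{x}}(\alpha_i)=a_n\boldsymbol{U}^T\boldsymbol{T}\boldsymbol{U}\bar{\boldsymbol{\omega}}(\alpha_i)$, and stacking the columns recovers the displayed identity. This is the clean high-level structure, and it localizes all the remaining work into one column computation that is independent of the chosen basis.

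The main obstacle is precisely that last step, $\boldsymbol{d}_i=a_n\boldsymbol{T}\bar{\boldsymbol{x}}(\alpha_i)$, which is where the exact anti-triangular shape of $\boldsymbol{T}$ must be pinned down. I would read off the coefficients of $F(y)/(y-\alpha_i)=a_n\prod_{j\ne i}(y-\alpha_j)$: the coefficient of $y^{\,n-1-s}$ is $a_n(-1)^s e_s^{(i)}$, where $e_s^{(i)}$ is the $s$-th elementary symmetric function of $\{\alpha_j:j\ne i\}$. The deletion recurrence $e_s=e_s^{(i)}+\alpha_i e_{s-1}^{(i)}$, unrolled to $e_s^{(i)}=\sum_{t=0}^{s}(-\alpha_i)^t e_{s-t}$, rewrites this coefficient as $a_n\sum_{t=0}^{s}(-1)^{s-t}e_{s-t}\,\alpha_i^{\,t}$, which is exactly the $(s{+}1)$-st entry of $a_n\boldsymbol{T}\bar{\boldsymbol{x}}(\alpha_i)$ once the nonzero entries of $\boldsymbol{T}$ are identified as $(-1)^{t}e_{t}$ constant along antidiagonals (so that, e.g., the written signs $(-1)^{2(n-1)},\dots$ all collapse to $(-1)^{\text{index of }e}$). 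The genuinely delicate part is this bookkeeping — matching the sign and the index of each $e_{t}$ against the power of $\alpha_i$ that the Vandermonde column $\bar{\boldsymbol{x}}(\alpha_i)$ supplies in the product $\boldsymbol{T}\bar{\boldsymbol{x}}(\alpha_i)$ — everything else being the symmetry of $\boldsymbol{B_\omega}$, the root evaluation of the Cayley quotient, and the triangular change of basis $\boldsymbol{U}$.
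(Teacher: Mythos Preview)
Your proposal is correct and follows essentially the same path as the paper: evaluate the Cayley quotient at a root of $F$, expand $F(y)/(y-\alpha_i)$ via the elementary symmetric functions $e_s^{(i)}$, unroll the deletion recurrence $e_s^{(i)}=\sum_{t}(-1)^t e_{s-t}\alpha_i^{t}$ to produce the anti-triangular matrix $\boldsymbol{T}$, and pass between bases with $\boldsymbol{U}$. The only noteworthy difference is that you evaluate at $x=\alpha_i$ and then invoke the symmetry of $\boldsymbol{B_\omega}$ to transpose to a column identity, whereas the paper evaluates at $y=\alpha_i$ directly and compares coefficients in $x$, which yields $\boldsymbol{B_\omega}\bar{\boldsymbol{\omega}}(\alpha_i)=a_n\boldsymbol{U}^T[(-1)^0 e_0^{(i)},\dots,(-1)^{n-1}e_{n-1}^{(i)}]^T G(\alpha_i)$ without ever needing symmetry; your detour is harmless but unnecessary.
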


\begin{proof}
The proof is long and will be divided into several steps.
\begin{enumerate}[(i)]
\item Let
\begin{equation}\label{eq:cayley_quotient}
C(x,y)=\frac{\left |\begin{matrix}
F\left ( x \right )  & F\left (  y\right ) \\
G\left ( x \right )  & G\left ( y \right )
\end{matrix} \right | }{x-y}.
\end{equation}
By Definition \ref{def:bez_mat}, $\boldsymbol{B_\omega}$ satisfies
\begin{equation}\label{eq:Cayley_bez}
C(x,y)=\bar{\boldsymbol{\omega}}(x)^T\cdot \boldsymbol{B_\omega}\cdot\bar{\boldsymbol{\omega}}(y)
\end{equation}

\item Since $F(\alpha_i)=0$ for $i=1,\ldots,n$,
by setting $y=\alpha_i$ in \eqref{eq:Cayley_bez}, we get
\begin{equation}\label{eq:C_alphai_1}
C(x,\alpha_i)=\boldsymbol{\bar{\boldsymbol{\omega}}}(x)^T\cdot \boldsymbol{B_\omega}\cdot\bar{\boldsymbol{\omega}}(\alpha_i).
\end{equation}

\item On the other hand, from \eqref{eq:cayley_quotient}, we have
\[
C \left ( x,\alpha _{i} \right )=\frac{F\left ( x \right )G\left ( \alpha _{i} \right )  }{x- \alpha _{i} }
=a_{n}\prod_{\substack{j=1\\ j\ne i}}^{n}\left ( x-\alpha _{j} \right ) G\left ( \alpha _{i} \right ).
\]

\item Let $e_{j}^{\left ( i \right ) }  $ denote the $j$-th elementary symmetric polynomial\footnote{The $k$-th elementary symmetric polynomial in $x_1,x_2,\ldots,x_n$ is defined as
$\sigma_k=\sum\limits_{\substack{1\le i_1<\cdots<i_k\le n\\
\{i_1,\ldots,i_k\}\subset\{1,\ldots,n\}}}x_{i_1}\cdots x_{i_k}$.} in $\alpha_{1} ,\dots ,\alpha_{i-1} ,\alpha_{i+1},\dots, \alpha_{n}$ (with $e_{0}^{\left ( i \right ) } :=1$). Then
\begin{align*}
C \left ( x,\alpha _{i} \right )
&=a_{n} \left ( \sum_{j=0}^{n-1}(-1)^je_{j}^{(i)}x^{n-1-j}\right )G(\alpha_i)\\
&=a_{n}
\begin{bmatrix}
x^{n-1}&\cdots&x^{0}
\end{bmatrix}
\begin{bmatrix}
(-1)^{0}e_{0}^{(i)}\\
\vdots\\
(-1)^{n-1}e_{n-1}^{(i)}
\end{bmatrix}G(\alpha_i)\\
&=a_{n}
\bar{\boldsymbol{x}}^T
\begin{bmatrix}
(-1)^{0}e_{0}^{(i)}\\
\vdots\\
(-1)^{n-1}e_{n-1}^{(i)}
\end{bmatrix}G(\alpha_i).
\end{align*}

\item Since $\bar{\boldsymbol{x}}=\boldsymbol{U\bar{\omega}}$, we have
\begin{align}
C \left ( x,\alpha _{i} \right )
=a_{n}\boldsymbol{\bar{\omega}}^T\boldsymbol{U}^T
\begin{bmatrix}
(-1)^{0}e_{0}^{(i)}\\
\vdots\\
(-1)^{n-1}e_{n-1}^{(i)}
\end{bmatrix}G(\alpha_i)
=a_{n}\boldsymbol{\bar{\omega}}^T\boldsymbol{U}^{T}
\begin{bmatrix}
(-1)^{0}e_{0}^{(i)}G(\alpha_i)\\
\vdots
\\(-1)^{n-1}e_{n-1}^{(i)}G(\alpha_i)
\end{bmatrix}.\label{eq:C_alphai_2}
\end{align}

\item Comparing the coefficients of $C \left ( x,\alpha _{i} \right )$ in \eqref{eq:C_alphai_1} and \eqref{eq:C_alphai_2}, we obtain
\[\boldsymbol{B_\omega}\bar{\boldsymbol{\omega}}(\alpha_i)=a_{n}\boldsymbol{U}^T\begin{bmatrix}
e_{0}^{(i)}G(\alpha_i)\\
\vdots\\
(-1)^{n-1}e_{n-1}^{(i)}G(\alpha_i)
\end{bmatrix}.\]

\item
Assembling $\boldsymbol{B_\omega}\bar{\boldsymbol{\omega}}(\alpha_i)$ for $i=1,\ldots,n$ horizontally into a single matrix, we get
\begin{align}
&
\begin{bmatrix}
\boldsymbol{B_\omega}\bar{\boldsymbol{\omega}}(\alpha_1)&\cdots&\boldsymbol{B_\omega}\bar{\boldsymbol{\omega}}(\alpha_n)
\end{bmatrix}
\notag\\[5pt]
=\,&
\begin{bmatrix}
a_{n}\boldsymbol{U}^T\begin{bmatrix}
e_{0}^{(1)}G(\alpha_1)\\
\vdots\\
(-1)^{n-1}e_{n-1}^{(1)}G(\alpha_1)
\end{bmatrix}&\cdots&a_{n}\boldsymbol{U}^T\begin{bmatrix}
e_{0}^{(n)}G(\alpha_n)\\
\vdots\\
(-1)^{n-1}e_{n-1}^{(n)}G(\alpha_n)
\end{bmatrix}
\end{bmatrix}\notag\\[5pt]
=\,&a_{n}\boldsymbol{U}^T
\begin{bmatrix}
(-1)^{0}e_{0}^{(1)}G(\alpha_1)&\cdots&(-1)^{0}e_{0}^{(n)}G(\alpha_n)\\
\vdots&\ddots&\vdots\\
(-1)^{n-1}e_{n-1}^{(1)}G(\alpha_1)&\cdots&(-1)^{n-1}e_{n-1}^{(n)}G(\alpha_n) \end{bmatrix}\notag\\[5pt]
=&a_{n}\boldsymbol{U}^T
\begin{bmatrix}
(-1)^{0}e_{0}^{(1)}&\cdots&(-1)^{0}e_{0}^{(n)}\\
\vdots&\ddots&\vdots\\
(-1)^{n-1}e_{n-1}^{(1)}&\cdots&(-1)^{n-1}e_{n-1}^{(n)}
\end{bmatrix}
\begin{bmatrix}
G(\alpha_1)&&\\
&\ddots&\\
&&G(\alpha_n)
\end{bmatrix}\notag\\
=&a_{n}\boldsymbol{U}^T
\begin{bmatrix}
(-1)^{0}&&\\
&\ddots&\\
&&(-1)^{n-1}
\end{bmatrix}
\begin{bmatrix}
e_{0}^{(1)}&\cdots&e_{0}^{(n)}\\
\vdots&\ddots&\vdots\\
e_{n-1}^{(1)}&\cdots&e_{n-1}^{(n)}
\end{bmatrix}
\begin{bmatrix}
G(\alpha_1)&&\\
&\ddots&\\
&&G(\alpha_n)
\end{bmatrix}.\label{eq:expansion_ei}
\end{align}

\item Recall \cite[Lemma 35]{hongyang2021subresultant} which states that
\[
e_{j}^{\left ( i \right ) }=\sum_{k=0}^{j}(-1)^ke_{j-k}\alpha_i^k=
\setlength{\arraycolsep}{1.6pt}
\begin{bmatrix}
0&\cdots&0&(-1)^je_0&\cdots&(-1)^0e_j
\end{bmatrix}\bar{\boldsymbol{x}}(\alpha_i)
\]
where $e_{j}$ denotes the $j$-th elementary symmetric polynomial in $\alpha_{1} ,\dots ,\alpha_{n}$ (with $e_{0}:=1$).
Therefore,
\begin{align}
&
\setlength{\arraycolsep}{2.5pt}
\begin{bmatrix}
(-1)^{0}&&\\
&\ddots&\\
&&(-1)^{n-1}
\end{bmatrix}
\begin{bmatrix}
e_{0}^{(1)}&\cdots&e_{0}^{(n)}\\
\vdots&\ddots&\vdots\\
e_{n-1}^{(1)}&\cdots&e_{n-1}^{(n)}
\end{bmatrix}\notag\\
=&
\setlength{\arraycolsep}{0.5pt}
\begin{bmatrix}
(-1)^{0}&&\\
&\ddots&\\
&&(-1)^{n-1}
\end{bmatrix}
\begin{bmatrix}
&&&(-1)^{0}e_0\\
&&(-1)^{1}e_0&(-1)^{0}e_1\\
&\begin{sideways}
$\ddots$
\end{sideways}&\vdots&\vdots\\
(-1)^{n-1}e_0&\cdots&\cdots&(-1)^{0}e_{n-1}
\end{bmatrix}
\cdot\begin{bmatrix}
\bar{\boldsymbol{x}}(\alpha_1)&\cdots&\bar{\boldsymbol{x}}(\alpha_n)
\end{bmatrix}\notag\\
=&\boldsymbol{T}
\begin{bmatrix}
\bar{\boldsymbol{x}}(\alpha_1)&\cdots&\bar{\boldsymbol{x}}(\alpha_n)
\end{bmatrix}\label{eq:conversion_ei_alpha_i}
\end{align}
where
\[\boldsymbol{T}=\begin{bmatrix}
&&&(-1)^{0}e_0\\
&&(-1)^{2}e_0&(-1)^{1}e_1\\
&\begin{sideways}
$\ddots$
\end{sideways}&\vdots&\vdots\\
(-1)^{2(n-1)}e_0&\cdots&\cdots&(-1)^{n-1}e_{n-1}
\end{bmatrix}.\]

\item The substitution of \eqref{eq:conversion_ei_alpha_i} into \eqref{eq:expansion_ei} immediately yields
\begin{align*}
&\boldsymbol{B_\omega}
\begin{bmatrix}
\bar{\boldsymbol{\omega}}(\alpha_1)&\cdots&\bar{\boldsymbol{\omega}}(\alpha_n)
\end{bmatrix}\\
=\,&
\begin{bmatrix}
\boldsymbol{B_\omega}\bar{\boldsymbol{\omega}}(\alpha_1)&\cdots&\boldsymbol{B_\omega}\bar{\boldsymbol{\omega}}(\alpha_n)
\end{bmatrix}\\
=\,&a_{n}\boldsymbol{U}^T
\boldsymbol{T}
\begin{bmatrix}
\bar{\boldsymbol{x}}(\alpha_1)&\cdots&\bar{\boldsymbol{x}}(\alpha_n)
\end{bmatrix}\begin{bmatrix}
G(\alpha_1)&&\\
&\ddots&\\
&&G(\alpha_n)
\end{bmatrix}\\
=\,&a_{n}\boldsymbol{U}^T
\boldsymbol{T}
\begin{bmatrix}
\bar{\boldsymbol{x}}(\alpha_1)G(\alpha_1)&\cdots&\bar{\boldsymbol{x}}(\alpha_n)G(\alpha_n)
\end{bmatrix}\\
=&a_{n}\boldsymbol{U}^T\boldsymbol{TU}
\begin{bmatrix}
\bar{\boldsymbol{\omega}}(\alpha_1)G(\alpha_1)&\cdots&\bar{\boldsymbol{\omega}}(\alpha_n)G(\alpha_n)
\end{bmatrix}.
\end{align*}

\end{enumerate}
\end{proof}

\begin{lemma}\label{lem:lowerpart}
Let $\boldsymbol{M}\in\mathbb{F}_{(n-k)\times n}$ with $k<n$ and
$\bar{\boldsymbol{\omega}}=(\omega_{n-1},$ $\ldots,\omega_1,\omega_{0})^{T}$ be a general basis of $\mathbb{F}_{n-1}[x]$.  Given
$\boldsymbol{\beta}=(\beta_1,\ldots,\beta_n)\in\overline{\mathbb{F}}^n$$ $, we have
\begin{equation}\label{eq:MXk}
\left|
\begin{array}{c}
\boldsymbol{M}\\
{\boldsymbol{X}_{\boldsymbol{\omega},k}}
\end{array}\right|=\frac{(-1)^k\left|
\begin{array}{ccc|c}
&\boldsymbol{M}\boldsymbol{W}(\boldsymbol{\beta})&&
\\\hline
\boldsymbol{\omega}_{k}(\beta_1)&\cdots&\boldsymbol{\omega}_k(\beta_n)&\boldsymbol{\omega}_k
\end{array}\right|}{|\boldsymbol{W}(\boldsymbol{\beta})|}
\end{equation}
where
$
{\boldsymbol{W}}(\boldsymbol{\beta})=
\setlength{\arraycolsep}{1pt}
\begin{bmatrix}
\bar{\boldsymbol{\omega}}(\beta_1)&\cdots&\bar{\boldsymbol{\omega}}(\beta_n)
\end{bmatrix}$ and $\boldsymbol{\omega}_k=(\omega_k,\ldots,\omega_1,\omega_0)^T$.
\end{lemma}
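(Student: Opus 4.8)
The plan is to reduce the claimed identity to the multiplicativity of the determinant by right-multiplying the $n\times n$ matrix on the left of \eqref{eq:MXk} by $\boldsymbol{W}(\boldsymbol{\beta})$. Since $\boldsymbol{M}$ is $(n-k)\times n$ and $\boldsymbol{X}_{\boldsymbol{\omega},k}$ is $k\times n$, the stacked matrix is square of order $n$, and so is $\boldsymbol{W}(\boldsymbol{\beta})$; hence, treating $x$ and the $\beta_j$ as indeterminates so that determinant multiplicativity over the relevant polynomial ring applies,
\[
\left|\begin{array}{c}\boldsymbol{M}\\ \boldsymbol{X}_{\boldsymbol{\omega},k}\end{array}\right|\cdot|\boldsymbol{W}(\boldsymbol{\beta})|
=\left|\begin{array}{c}\boldsymbol{M}\boldsymbol{W}(\boldsymbol{\beta})\\ \boldsymbol{X}_{\boldsymbol{\omega},k}\boldsymbol{W}(\boldsymbol{\beta})\end{array}\right|.
\]
It therefore suffices to show that the right-hand determinant equals $(-1)^k$ times the $(n+1)\times(n+1)$ numerator appearing in \eqref{eq:MXk}; working symbolically also legitimizes the division by $|\boldsymbol{W}(\boldsymbol{\beta})|$ even when $\boldsymbol{\beta}$ has repeated entries, exactly as in the interpretation of Remark \ref{rem:interpreting}.

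Next I would compute the $k\times n$ block $\boldsymbol{X}_{\boldsymbol{\omega},k}\boldsymbol{W}(\boldsymbol{\beta})$ entrywise. Reading off the sparse structure in \eqref{eq:Xw}, the $i$-th row of $\boldsymbol{X}_{\boldsymbol{\omega},k}$ carries a single $-1$, which selects the row of $\boldsymbol{W}(\boldsymbol{\beta})$ indexed by $\omega_{k+1-i}$, together with the entry $\omega_{k-i+1}/\omega_0$ in its last column, which multiplies the $\omega_0$-row of $\boldsymbol{W}(\boldsymbol{\beta})$. Because $\omega_0$ is monic of degree $0$, i.e. $\omega_0\equiv 1$, the $\omega_0$-row of $\boldsymbol{W}(\boldsymbol{\beta})$ is the all-ones row, and one finds
\[
\bigl(\boldsymbol{X}_{\boldsymbol{\omega},k}\boldsymbol{W}(\boldsymbol{\beta})\bigr)_{ij}
=\omega_{k-i+1}(x)-\omega_{k-i+1}(\beta_j),\qquad 1\le i\le k,\ 1\le j\le n.
\]

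For the $(n+1)\times(n+1)$ numerator I would apply the determinant-preserving column operations that subtract the last column from each of the first $n$ columns. The top-right block is already zero, so the block $\boldsymbol{M}\boldsymbol{W}(\boldsymbol{\beta})$ is left unchanged; in the bottom block the $\omega_0$-row (again all ones, using $\omega_0\equiv 1$) becomes $(0,\ldots,0,1)$, while the generic entry in the columns indexed by the $\beta_j$ turns into $\omega_{k-i+1}(\beta_j)-\omega_{k-i+1}(x)$. Expanding along this last row, whose only nonzero entry is the trailing $1$ sitting in position $(n+1,n+1)$ (so the cofactor sign is $+1$), collapses the determinant to the $n\times n$ minor
\[
\left|\begin{array}{c}\boldsymbol{M}\boldsymbol{W}(\boldsymbol{\beta})\\ \bigl[\,\omega_{k-i+1}(\beta_j)-\omega_{k-i+1}(x)\,\bigr]_{ij}\end{array}\right|
=(-1)^k\left|\begin{array}{c}\boldsymbol{M}\boldsymbol{W}(\boldsymbol{\beta})\\ \boldsymbol{X}_{\boldsymbol{\omega},k}\boldsymbol{W}(\boldsymbol{\beta})\end{array}\right|,
\]
where the $(-1)^k$ arises from negating each of the $k$ surviving bottom rows so as to match the entries computed in the previous step, the row orderings agreeing so that no further permutation sign intervenes.

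Combining the first and last displays shows that the left-hand side of \eqref{eq:MXk} multiplied by $|\boldsymbol{W}(\boldsymbol{\beta})|$ equals $(-1)^k$ times the $(n+1)\times(n+1)$ numerator, which is exactly \eqref{eq:MXk} after dividing through by $|\boldsymbol{W}(\boldsymbol{\beta})|$. I expect the main obstacle to be purely a matter of bookkeeping: correctly decoding the index pattern of $\boldsymbol{X}_{\boldsymbol{\omega},k}$ in \eqref{eq:Xw}, namely which row of $\boldsymbol{W}(\boldsymbol{\beta})$ each $-1$ selects and which $\omega_{k-i+1}$ occupies the last column, and then lining up the rows of the collapsed minor with those of $\boldsymbol{X}_{\boldsymbol{\omega},k}\boldsymbol{W}(\boldsymbol{\beta})$ so that the sign $(-1)^k$ is accounted for without spurious contributions; the only conceptual point to flag is that the whole argument must be read as an identity of polynomials in $x$ and the $\beta_j$ to justify dividing by $|\boldsymbol{W}(\boldsymbol{\beta})|$.
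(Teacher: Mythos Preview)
Your proposal is correct and follows essentially the same approach as the paper: both proofs begin with the multiplicativity identity $\bigl|\begin{smallmatrix}\boldsymbol{M}\\ \boldsymbol{X}_{\boldsymbol{\omega},k}\end{smallmatrix}\bigr|\cdot|\boldsymbol{W}(\boldsymbol{\beta})|=\bigl|\begin{smallmatrix}\boldsymbol{M}\boldsymbol{W}(\boldsymbol{\beta})\\ \boldsymbol{X}_{\boldsymbol{\omega},k}\boldsymbol{W}(\boldsymbol{\beta})\end{smallmatrix}\bigr|$, compute the lower block $\boldsymbol{X}_{\boldsymbol{\omega},k}\boldsymbol{W}(\boldsymbol{\beta})$ explicitly, and then pass between the $n\times n$ and $(n+1)\times(n+1)$ determinants via the constant $\omega_0$-row, extracting the factor $(-1)^k$ from $k$ sign-flipped rows. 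The only cosmetic difference is direction: the paper augments the $n\times n$ determinant by an extra row and column and performs \emph{row} operations (subtracting $\omega_i$ times the last row), whereas you start from the $(n+1)\times(n+1)$ numerator, perform \emph{column} operations (subtracting the last column), and then expand along the resulting $(0,\ldots,0,1)$ row; these are dual manipulations of the same identity.
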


\begin{proof}
We start the proof by recalling the  following fact:\[\left|
\begin{array}{c}
\boldsymbol{M}\\
{\boldsymbol{X}_{\boldsymbol{\omega},k}}
\end{array}\right|\cdot|\boldsymbol{W}(\boldsymbol{\beta})|
=\left|
\begin{array}{c}
M\boldsymbol{W}(\boldsymbol{\beta})\\
{\boldsymbol{X}_{\boldsymbol{\omega},k}\boldsymbol{W}}(\boldsymbol{\beta})
\end{array}
\right|.
\]
Then we consider the matrix product in the lower part and carry out the following multiplication\begin{align*}
{\boldsymbol{X}_{\boldsymbol{\omega},k}\boldsymbol W}(\boldsymbol{\beta})=&
\setlength{\arraycolsep}{2.2pt}
\begin{bmatrix}
&&-1&&&\omega_k\\
&&&\ddots&&\vdots\\
&&&&-1&\omega_1
\end{bmatrix}
\begin{bmatrix}
\omega_{n-1}(\beta_1)&\cdots&\omega_{n-1}(\beta_n)\\
\vdots&\ddots&\vdots\\
\omega_{0}(\beta_1)
&\cdots&\omega_{0}(\beta_n)
\end{bmatrix}
\\
=&\begin{bmatrix}
\omega_k\omega_0(\beta_1)-\omega_k(\beta_1)&\cdots&\omega_k\omega_0(\beta_n)-\omega_k(\beta_n)\\
\vdots&\ddots&\vdots\\
\omega_1\omega_0(\beta_1)-\omega_1(\beta_1)&\cdots&\omega_1\omega_0(\beta_n)-\omega_1(\beta_n)
\end{bmatrix}.
\end{align*}
Taking determinants on both sides yields
\begin{align*}
&\left|
\begin{array}{c}
\boldsymbol{M}\\
{\boldsymbol{X}_{\boldsymbol{\omega},k}}
\end{array}\right|\cdot|\boldsymbol{W}(\boldsymbol{\beta})|\\
=&\left|
\begin{array}{ccc}
&\boldsymbol{M}\boldsymbol{W}(\boldsymbol{\beta})&
\\\hline
\omega_k\omega_0(\beta_1)-\omega_k(\beta_1)
&\hspace{-0.8em}\cdots
&\hspace{-0.8em}\omega_k\omega_0(\beta_n)-\omega_k(\beta_n)
\\
\vdots&\ddots&\vdots\\
\omega_1\omega_0(\beta_1)-\omega_1(\beta_1)&
\hspace{-0.8em}\cdots&
\hspace{-0.8em}\omega_1\omega_0(\beta_n)-\omega_1(\beta_n) \end{array}\right|\\
=&\left|
\begin{array}{ccc|c}
&\boldsymbol{M}\boldsymbol{W}(\boldsymbol{\beta})&&
\\\hline
\omega_k\omega_0(\beta_1)-\omega_k(\beta_1)
&\hspace{-0.8em}\cdots
&\hspace{-0.8em}\omega_k\omega_0(\beta_n)-\omega_k(\beta_n)&
\\
\vdots&\ddots&\vdots&\\
\omega_1\omega_0(\beta_1)-\omega_1(\beta_1)&
\hspace{-0.8em}\cdots&
\hspace{-0.8em}\omega_1\omega_0(\beta_n)-\omega_1(\beta_n) &\\
\omega_0(\beta_1)
&\hspace{-0.8em}\cdots
&\hspace{-0.8em}\omega_0(\beta_n)&~~1~~
\end{array}\right|.
\end{align*}
Subtracting the $(n+1)$-th row multiplied by $\omega_i$ from the $(n-i+1)$-th row for $i=1,\ldots,k$, we obtain
\[
\left|
\begin{array}{c}
\boldsymbol{M}\\
{\boldsymbol{X}_{\boldsymbol{\omega},k}}
\end{array}\right|\cdot|\boldsymbol{W}(\boldsymbol{\beta})|=\left|
\begin{array}{ccc|c}
&\boldsymbol{M}\boldsymbol{W}(\boldsymbol{\beta})&&
\\\hline
-\omega_k(\beta_1)
&\cdots
&-\omega_k(\beta_n)&-\omega_k
\\
\vdots&\ddots&\vdots&\vdots\\
-\omega_1(\beta_1)&
\cdots&
-\omega_1(\beta_n) &-\omega_1\\
\omega_0(\beta_1)
&\cdots
&\omega_0(\beta_n)&1
\end{array}\right|.
\]
By factoring out $-1$ from the last $k+1$ rows except for the last row and setting $1$ in the last column to be $\omega_0$,  the following is achieved:
\[
\left|
\begin{array}{c}
\boldsymbol{M}\\
{\boldsymbol{X}_{\boldsymbol{\omega},k}}
\end{array}\right|\cdot|\boldsymbol{W}(\boldsymbol{\beta})|=(-1)^k\left|
\setlength{\arraycolsep}{2.5pt}
\begin{array}{ccc|c}
&\boldsymbol{M}\boldsymbol{W}(\boldsymbol{\beta})&&
\\\hline
\boldsymbol{\omega}_{k}(\beta_1)&\cdots&\boldsymbol{\omega}_k(\beta_n)&\boldsymbol{\omega}_k
\end{array}\right|
\]
which is equivalent to what we want.
\end{proof}
\begin{remark}
The expression in the right side of \eqref{eq:MXk} should  be interpreted as the way for interpreting the expression of $S_k$ in Remark \ref{rem:interpreting}:
\begin{enumerate}[(1)]
\item Treat $\beta_1,\ldots,\beta_n$ as indeterminates and carry out the exact division, which results in a symmetric polynomial in terms of $\beta_1,\ldots,,\beta_{n}$.

\item Evaluate the polynomial with $\beta_1,\ldots,\beta_n$  assigned some specific values in $\bar{\mathbb{F}}$.
\end{enumerate}
Otherwise, the denominator will vanish when $\beta_i=\beta_j$ for some $i\ne j$.
\end{remark}

Now we are ready to prove  Lemma \ref{lem:root2coeffs}.

\begin{proof}
The proof is a bit long and will be divided into several steps.
\begin{enumerate}[(i)]
\item  Specializing $\boldsymbol{M}$ and $\boldsymbol{\beta}$ with
$\boldsymbol{B}_{\boldsymbol{\omega},k}$ and $\boldsymbol{\alpha}=(\alpha_1,\ldots,\alpha_n)$ where $\alpha_i$'s are the roots of $F$ over $\overline{\mathbb{F}}$, we have
\begin{equation}\label{eq:BX}
\left|
\begin{array}{c}
\boldsymbol{B}_{\boldsymbol{\omega},k}\\
{\boldsymbol{X}_{\boldsymbol{\omega},k}}
\end{array}\right|\cdot|\boldsymbol{W}(\boldsymbol{\alpha})|=(-1)^k\left|
\setlength{\arraycolsep}{1.5pt}
\begin{array}{ccc|c}
&\boldsymbol{B}_{\boldsymbol{\omega},k} \boldsymbol{W}(\boldsymbol{\alpha})&&
\\\hline
\boldsymbol{\omega}_{k}(\alpha_1)&\cdots&\boldsymbol{\omega}_k(\alpha_n)&\boldsymbol{\omega}_k
\end{array}\right|.
\end{equation}

\item Next we keep simplifying $\boldsymbol{B}_{\boldsymbol{\omega},k}\boldsymbol{W}(\boldsymbol{\alpha})$.
\begin{enumerate}[(a)]
\item  Since $\boldsymbol{B}_{\boldsymbol{\omega},k}=\begin{bmatrix}
\boldsymbol{I}_{n-k}&\boldsymbol{0}_{(n-k)\times k}\end{bmatrix}\boldsymbol{B}_{\boldsymbol{\omega}} $
where $\boldsymbol{I}_{n-k}$ is the identity matrix of order $n-k$, by Lemma \ref{lem:upperpart},
\begin{align*}
\boldsymbol{B}_{\boldsymbol{\omega},k}\boldsymbol{W}(\boldsymbol{\alpha})
=&\begin{bmatrix}
\boldsymbol{I}_{n-k}&\boldsymbol{0}_{(n-k)\times k}\end{bmatrix}\boldsymbol{B}_{\boldsymbol{\omega}}\boldsymbol{W}(\boldsymbol{\alpha})\\
=&a_{n}\begin{bmatrix}
\boldsymbol{I}_{n-k}&\boldsymbol{0}_{(n-k)\times k}\end{bmatrix}\boldsymbol{U}^T\boldsymbol{TU}\cdot
\begin{bmatrix}
\bar{\boldsymbol{\omega}}(\alpha_1)G(\alpha_1)&\cdots&\bar{\boldsymbol{\omega}}(\alpha_n)G(\alpha_n)
\end{bmatrix}.
\end{align*}

\item Note that $\boldsymbol{U}$ is unit upper triangular  and  thus $\boldsymbol{U}^T$ is unit lower triangular. Furthermore, $\boldsymbol{T}$ is unit reversed lower triangular. Thus we may partition them in the following way:
\begin{equation}\label{eq:UT}
\boldsymbol{U}=
\begin{bmatrix}
\boldsymbol{U}_1&*\\
&*
\end{bmatrix},\quad
\boldsymbol{U}^T=
\begin{bmatrix}
*&\\
*&\boldsymbol{U}_2
\end{bmatrix},\quad
\boldsymbol{T}=
\begin{bmatrix}
&\boldsymbol{T}_1\\
*&*
\end{bmatrix},
\end{equation}
where $\boldsymbol{U}_1,\boldsymbol{U}_2,\boldsymbol{T}_1\in\mathbb{F}^{(n-k)\times(n-k)}$. It follows that
\begin{align*}
&\begin{bmatrix}
\boldsymbol{I}_{n-k}&\boldsymbol{0}_{(n-k)\times k}\end{bmatrix}\boldsymbol{U}^T\boldsymbol{TU}\\
=&\begin{bmatrix}
\boldsymbol{I}_{n-k}&\boldsymbol{0}_{(n-k)\times k}
\end{bmatrix}
\begin{bmatrix}
*&\\
*&\boldsymbol{U}_2
\end{bmatrix}
\begin{bmatrix}
&\boldsymbol{T}_1\\
*&*
\end{bmatrix}
\begin{bmatrix}
\boldsymbol{U}_1&*\\
&*
\end{bmatrix}\\
=&\begin{bmatrix}
\boldsymbol{0}_{(n-k)\times k}&\boldsymbol{U}_2\boldsymbol{T}_1\boldsymbol{U}_1
\end{bmatrix}.
\end{align*}

\item
Therefore,
\begin{align*}
\boldsymbol{B}_{\boldsymbol{\omega},k}\boldsymbol{\omega}(\boldsymbol{\alpha})=&
a_n\begin{bmatrix}
\boldsymbol{0}_{(n-k)\times k}&\boldsymbol{U}_2\boldsymbol{T}_1\boldsymbol{U}_1
\end{bmatrix}\cdot
\begin{bmatrix}
\bar{\boldsymbol{\omega}}(\alpha_1)G(\alpha_1)&\cdots&\bar{\boldsymbol{\omega}}(\alpha_n)G(\alpha_n)
\end{bmatrix}.
\end{align*}

\item Now we partition the matrix
$$
\begin{bmatrix}
\bar{\boldsymbol{\omega}}(\alpha_1)G(\alpha_1)~\cdots~\bar{\boldsymbol{\omega}}(\alpha_n)G(\alpha_n)
\end{bmatrix}$$
into two blocks with the upper block consisting of $k$ rows and the lower block consisting of $(n-k)$ rows. After carrying out the matrix multiplication, we obtain
\begin{align*}
\boldsymbol{B}_{\boldsymbol{\omega},k}\boldsymbol{W}(\boldsymbol{\alpha})
=&\,a_{n}  \boldsymbol{U}_2\boldsymbol{T}_1\boldsymbol{U}_1
\,\cdot
\setlength{\arraycolsep}{.5pt}
\begin{bmatrix}
\boldsymbol{\omega}_{n-k-1}(\alpha_1)G(\alpha_1) & \cdots & \boldsymbol{\omega}_{n-k-1}(\alpha_n)G(\alpha_n)
\end{bmatrix}
\end{align*}

\item Therefore,
\begin{align*}
&\left[
\begin{array}{ccc|c}
&\boldsymbol{B}_{\boldsymbol{\omega},k} \boldsymbol{W}(\boldsymbol{\alpha})&&
\\\hline
\boldsymbol{\omega}_{k}(\alpha_1)&\cdots&\boldsymbol{\omega}_k(\alpha_n)&\boldsymbol{\omega}_k
\end{array}\right]\\
=&\begin{bmatrix}
a_{n}  \boldsymbol{U}_2\boldsymbol{T}_1\boldsymbol{U}_1&\\
&\boldsymbol{I}_{k+1}
\end{bmatrix}\cdot
\left[
\begin{array}{ccc|c}
\boldsymbol{\omega}_{n-k-1}(\alpha_1)G(\alpha_1) & \cdots & \boldsymbol{\omega}_{n-k-1}(\alpha_n)G(\alpha_n)&
\\\hline
\boldsymbol{\omega}_{k}(\alpha_1)&\cdots&\boldsymbol{\omega}_k(\alpha_n)&\boldsymbol{\omega}_k
\end{array}\right]
\end{align*}
where $\boldsymbol{I}_{k+1}$ is the identity matrix of order $k+1$.
\end{enumerate}

\item The current step is devoted to calculate the determinant of the above matrix which is closely related with $S_k$ in Lemma
\ref{lem:std2nstd}.
\begin{enumerate}[(a)]
\item  Recall Lemma \ref{lem:std2nstd} and convert it into

\begin{align*}
S_k={c\cdot\left|
\setlength{\arraycolsep}{1.0pt}
\begin{array}{lclc}
\boldsymbol{\omega}_{n-k-1}(\alpha_1)G(\alpha_1) & \cdots & \boldsymbol{\omega}_{n-k-1}(\alpha_n)G(\alpha_n) &\\
\boldsymbol{\omega}_{k}(\alpha_1) & \cdots & \boldsymbol{\omega}_{k}(\alpha_n)&\boldsymbol{\omega}_{k}
\end{array}\right|}{
}\big/|\boldsymbol{W}(\boldsymbol{\alpha})|
\end{align*}

\item By (ii.e),
\[
\left|
\setlength{\arraycolsep}{1.0pt}
\begin{array}{lclc}
\boldsymbol{\omega}_{n-k-1}(\alpha_1)G(\alpha_1) & \cdots & \boldsymbol{\omega}_{n-k-1}(\alpha_n)G(\alpha_n) &\\
\boldsymbol{\omega}_{k}(\alpha_1) & \cdots & \boldsymbol{\omega}_{k}(\alpha_n)&\boldsymbol{\omega}_{k}
\end{array}\right|=\frac{\left|
\begin{array}{ccc|c}
&\boldsymbol{B}_{\boldsymbol{\omega},k} \boldsymbol{W}(\boldsymbol{\alpha})&&
\\\hline
\boldsymbol{\omega}_{k}(\alpha_1)&\cdots&\boldsymbol{\omega}_k(\alpha_n)&\boldsymbol{\omega}_k
\end{array}\right|}{\begin{vmatrix}
a_{n}  \boldsymbol{U}_2\boldsymbol{T}_1\boldsymbol{U}_1&\\
&\boldsymbol{I}_{k+1}
\end{vmatrix}}
\]

\item With $M$ and $\boldsymbol{\beta}$ in Lemma \ref{lem:lowerpart} specialized with $\boldsymbol{B}_{\boldsymbol{\omega},k}$ and $\boldsymbol{\alpha}$, we have
\[\left|
\begin{array}{ccc|c}
&\boldsymbol{B}_{\boldsymbol{\omega},k}\boldsymbol{W}(\boldsymbol{\alpha})&&
\\\hline
\boldsymbol{\omega}_{k}(\alpha_1)&\cdots&\boldsymbol{\omega}_k(\alpha_n)&\boldsymbol{\omega}_k
\end{array}\right|
=(-1)^k\left|
\begin{array}{c}
\boldsymbol{B}_{\boldsymbol{\omega},k}\\
{\boldsymbol{X}_{\boldsymbol{\omega},k}}
\end{array}\right||\boldsymbol{W}(\boldsymbol{\alpha})|.
\]

\item Combining (iii.a)-(iii.c), one may obtain
\[S_k=\frac{c}{(-1)^k|a_n\boldsymbol{U}_2\boldsymbol{T}_1\boldsymbol{U}_1|}
\cdot\left|
\begin{array}{c}
\boldsymbol{B}_{\boldsymbol{\omega},k} \\
{\boldsymbol{X}_{\boldsymbol{\omega},k}}
\end{array}\right|.\]
\end{enumerate}

\item  In the last step, we will figure out what the coefficient in front of the resulting determinant is above.

\begin{enumerate}[(a)]
\item Recall $\boldsymbol{U}$ is unit upper triangular and $\boldsymbol{T}$ is unit reversed upper triangular. Thus $\boldsymbol{U}_1$ and $\boldsymbol{U}_2$ are both unit upper triangular and
$\boldsymbol{T}_1$ is unit reversed lower triangular. Moreover, the orders of the three matrices are all $n-k$. Therefore,
\[
|a_n\boldsymbol{U}_2\boldsymbol{T}_1\boldsymbol{U}_1|=a_n^{n-k}|\boldsymbol{U}_2|\cdot|\boldsymbol{T}_1|\cdot |\boldsymbol{U}_1|=a_n^{n-k}\cdot(-1)^{\sum_{i=1}^{n-k}(i+1)}=(-1)^{{n-k\choose 2}}\cdot a_n^{n-k},
\]
which indicates that
\[S_k=(-1)^{k+{n-k\choose 2}}ca_{n}^{k-n  }\left|
\begin{array}{c}
\boldsymbol{B}_{\boldsymbol{\omega},k} \\
{\boldsymbol{X}_{\boldsymbol{\omega},k}}
\end{array}\right|.\]

\item
We keep simplify the constant factor in the above equation below.
\begin{align*}
&\,(-1)^{k+{n-k\choose 2}}ca_{n}^{k-n  }
=\,(-1)^{k+{n-k\choose 2}}\cdot(-1)^{k}a_{n}^{m-k}a_{n}^{k-n  }
=(-1)^{n-k\choose 2}a_n^{m-n}.
\end{align*}

\item To sum up, we have
\[S_k=(-1)^{n-k\choose 2}a_n^{m-n}\cdot\left|
\begin{array}{c}
\boldsymbol{B}_{\boldsymbol{\omega},k}\\
{\boldsymbol{X}_{\boldsymbol{\omega},k}}
\end{array}\right|.
\]
\end{enumerate}
\end{enumerate}
\end{proof}

\subsection{Converting $S_k$ in general basis from a single determinant to a determinental polynomial}

In this subsection, we prove a more general result than what is needed. The more general result is presented in the hope that it would be useful in some other occasions.

\begin{lemma}\label{lem:det2detp}
Given $\boldsymbol{\omega}=(\omega_{n},\ldots,\omega_{1},\omega_0)^{T}$  which is a {general basis} of $\mathbb{F}_{n}[x]$ and $\boldsymbol{M}\in\mathbb{F}^{(n-k)\times n}$ where $k<n$, let
$\boldsymbol{X}_{\boldsymbol{\omega},k}$ be as in \eqref{eq:Xw}.
Then
\[\operatorname*{detp}\nolimits_{\boldsymbol{\omega}}\boldsymbol{M}
=\begin{vmatrix}
\boldsymbol{M}\\
\boldsymbol{X}_{\boldsymbol{\omega},k}
\end{vmatrix}.\]
\end{lemma}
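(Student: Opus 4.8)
The plan is to evaluate the $n\times n$ determinant on the right-hand side by one column operation followed by a Laplace expansion along the bottom $k$ rows. First I would record the precise sparsity of $\boldsymbol{X}_{\boldsymbol{\omega},k}$: reading off \eqref{eq:Xw}, its $j$-th row ($j=1,\ldots,k$) carries the entry $-1$ in column $n-k+j-1$ and the entry $\omega_{k-j+1}/\omega_0$ in column $n$, and vanishes elsewhere. I would also note at the outset that $\omega_0=1$, since $\omega_0$ is monic of degree $0$; hence $\omega_i/\omega_0=\omega_i$, and the $i=0$ summand of $\operatorname{detp}_{\boldsymbol{\omega}}\boldsymbol{M}$ is simply $|\hat{\boldsymbol{M}}_0|$. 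Let $N$ denote the $n\times n$ matrix obtained by stacking $\boldsymbol{M}$ over $\boldsymbol{X}_{\boldsymbol{\omega},k}$, whose top rows $1,\ldots,n-k$ form $\boldsymbol{M}$ and whose bottom rows $n-k+1,\ldots,n$ form $\boldsymbol{X}_{\boldsymbol{\omega},k}$, and write $\boldsymbol{m}_1,\ldots,\boldsymbol{m}_n\in\mathbb{F}^{n-k}$ for the columns of $\boldsymbol{M}$.

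The key step is to clear the last column of the $\boldsymbol{X}$-block. For each $j$ the $\omega$-entry $\omega_{k-j+1}$ in row $n-k+j$ lies in the same row as the unique $-1$ of column $n-k+j-1$, so adding $\sum_{i=1}^{k}\omega_i\,\boldsymbol{m}_{n-i}$ to the last column cancels every $\omega$-entry of the bottom block while leaving $\det N$ unchanged. Because each of the columns $n-k,\ldots,n-1$ meets the bottom block in exactly one $-1$, these operations do not interfere, and after them the bottom $k$ rows are nonzero only in columns $n-k,\ldots,n-1$, forming a diagonal of $-1$'s, while the modified last column equals $\boldsymbol{m}_n+\sum_{i=1}^{k}\omega_i\boldsymbol{m}_{n-i}$ in the top block and $0$ in the bottom block.

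Next I would Laplace-expand $\det N$ along the bottom $k$ rows $R=\{n-k+1,\ldots,n\}$. Since these rows now vanish outside columns $n-k,\ldots,n-1$, the only column set yielding a nonzero minor is $C=\{n-k,\ldots,n-1\}$, whose bottom minor is $\det\operatorname{diag}(-1,\ldots,-1)=(-1)^k$. The Laplace sign is $(-1)^{\sigma(R)+\sigma(C)}$ with $\sigma$ the sum of indices, and since $R=C+1$ elementwise gives $\sigma(R)=\sigma(C)+k$, this sign equals $(-1)^{k}$ and cancels the $(-1)^k$ from the minor. The complementary minor runs over the top rows and the columns $\{1,\ldots,n-k-1,n\}$, namely $\det[\boldsymbol{m}_1,\ldots,\boldsymbol{m}_{n-k-1},\,\boldsymbol{m}_n+\sum_{i=1}^{k}\omega_i\boldsymbol{m}_{n-i}]$. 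Expanding by multilinearity in the last column and recognizing $[\boldsymbol{m}_1,\ldots,\boldsymbol{m}_{n-k-1},\boldsymbol{m}_{n-i}]=\hat{\boldsymbol{M}}_i$ yields $\sum_{i=0}^{k}\omega_i|\hat{\boldsymbol{M}}_i|=\operatorname{detp}_{\boldsymbol{\omega}}\boldsymbol{M}$, which establishes the claim.

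I expect the only delicate point to be the sign bookkeeping: one must verify both that $C=\{n-k,\ldots,n-1\}$ is the unique surviving column set and that the minor sign $(-1)^k$ is exactly annihilated by the Laplace sign $(-1)^{\sigma(R)+\sigma(C)}=(-1)^k$. Everything else—the single column operation and the final multilinear expansion—is routine, and the argument runs verbatim over the polynomial ring $\mathbb{F}[x]$ for an arbitrary $\boldsymbol{M}\in\mathbb{F}^{(n-k)\times n}$, as required by the more general formulation of the lemma.
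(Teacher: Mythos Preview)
Your proof is correct and follows essentially the same route as the paper: the paper performs the identical column operation (adding $\omega_i$ times column $n-i$ to the last column for $i=1,\ldots,k$), then reduces to the $(n-k)\times(n-k)$ determinant $\bigl|\boldsymbol{M}_1,\ldots,\boldsymbol{M}_{n-k-1},\sum_{i=0}^{k}\omega_i\boldsymbol{M}_{n-i}\bigr|$ and expands by multilinearity. The only cosmetic difference is that the paper eliminates the bottom rows one at a time (each $-1$ sitting in position $(n-j,\,n-j-1)$ of the current minor contributes a cofactor sign $(-1)^{2(n-j)-1}\cdot(-1)=+1$), whereas you do all $k$ rows at once via the Laplace expansion and verify the sign through $\sigma(R)=\sigma(C)+k$; both bookkeeping schemes yield the same unit factor.
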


\begin{proof}
Denote the $i$-th column of $\boldsymbol{M}$ with $\boldsymbol{M}_i$. Then
\[
\begin{vmatrix}
\boldsymbol{M}\\
\boldsymbol{X}_{\boldsymbol{\omega},k}
\end{vmatrix}=
\begin{vmatrix}
\boldsymbol{M}_{1}&\cdots&\boldsymbol{M}_{n-k}&\cdots&\cdots&\boldsymbol{M}_n\\
&&-1&&&\omega_k\\
&&&\ddots&&\vdots\\
&&&&-1&\omega_1
\end{vmatrix}
\]
Adding the $(n-i)$-th column multiplied by $\omega_i$ for $i=1,\ldots,k$ to the last column yields
\[
\begin{vmatrix}
\boldsymbol{M}\\
\boldsymbol{X}_{\boldsymbol{\omega},k}
\end{vmatrix}=
\begin{vmatrix}
\boldsymbol{M}_{1}&\cdots&\boldsymbol{M}_{n-k}&
\cdots&\cdots&\sum_{i=1}^k{\boldsymbol{M}_{n-i}\omega_i}+\boldsymbol{M}_n\\
&&-1&&&0\\
&&&\ddots&&\vdots\\
&&&&-1&0
\end{vmatrix}.
\]
Note that there is only one non-zero entry in the $(n-i)$-th row of the determinant in the right-hand-side for $i=0,1\ldots,k-1$ and $\omega_0=1$. Thus its expansion  results in
\[
\begin{vmatrix}
\boldsymbol{M}\\
\boldsymbol{X}_{\boldsymbol{\omega},k}
\end{vmatrix}=
\begin{vmatrix}
\boldsymbol{M}_{1}&\cdots&\boldsymbol{M}_{n-k-1}&\sum_{i=0}^k{\boldsymbol{M}_{n-i}\omega_i}
\end{vmatrix}.\]
Keep simplifying the determinant and we get
\begin{align*}
\begin{vmatrix}
\boldsymbol{M}\\
\boldsymbol{X}_{\boldsymbol{\omega},k}
\end{vmatrix}=&
\sum_{i=0}^k\begin{vmatrix}
\boldsymbol{M}_{1}&\cdots&\boldsymbol{M}_{n-k-1}&{\boldsymbol{M}_{n-i}\omega_i}
\end{vmatrix}\\
=&\sum_{i=0}^k\begin{vmatrix}
\boldsymbol{M}_{1}&\cdots&\boldsymbol{M}_{n-k-1}&{\boldsymbol{M}_{n-i}}
\end{vmatrix}{\omega_i}\\
=&
\sum_{i=0}^k\begin{vmatrix}
\hat{\boldsymbol{M}}_{i}
\end{vmatrix}{\omega_i}\\
=&\operatorname*{detp}\nolimits_{\boldsymbol{\omega}}\boldsymbol{M},
\end{align*}
which is equivalent with the result we want.
\end{proof}

After specializing $\boldsymbol{M}$ in Lemma \ref{lem:det2detp} with $\boldsymbol{B}_{\boldsymbol{\omega},k}$ in Lemma \ref{lem:root2coeffs}, we immediately deduce the following result.

\begin{corollary}
Given a general basis $\boldsymbol{\omega}$ of $\mathbb{F}_n[x]$ and $F$, $G$ as in \eqref{eq:F+G},
let $\boldsymbol{B}_{\boldsymbol{\omega},k}$ be the submatrix of  the B\'ezout matrix of $F$ and $G$ with respect to $x$ in the basis $\boldsymbol{\omega}$ by deleting the last $k$ rows and
$\boldsymbol{X}_{\boldsymbol{\omega},k}$ be as in \eqref{eq:Xw}.
Then
\begin{equation}\label{eq:det2detp}
\begin{vmatrix}
\boldsymbol{B}_{\boldsymbol{\omega},k}\\
\boldsymbol{X}_{\boldsymbol{\omega},k}
\end{vmatrix}=\operatorname*{detp}\nolimits_{\boldsymbol{\omega}}
\boldsymbol{B}_{\boldsymbol{\omega},k}.
\end{equation}
\end{corollary}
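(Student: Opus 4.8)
The plan is to obtain the Corollary as an immediate specialization of Lemma \ref{lem:det2detp}; essentially all of the substantive work has already been carried out in that lemma, so the task reduces to verifying that $\boldsymbol{B}_{\boldsymbol{\omega},k}$ is a legitimate choice for the matrix $\boldsymbol{M}$ there and then reading off its conclusion.

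First I would check the hypotheses of Lemma \ref{lem:det2detp}, which requires an input matrix $\boldsymbol{M}\in\mathbb{F}^{(n-k)\times n}$ with $k<n$. By Definition \ref{def:bez_mat} the B\'ezout matrix $\boldsymbol{B}_{\boldsymbol{\omega}}$ has size $n\times n$, and each of its entries is a polynomial expression in the coefficients $a_i$ and $b_i$ of $F$ and $G$ and hence lies in $\mathbb{F}$. Deleting its last $k$ rows therefore yields a matrix $\boldsymbol{B}_{\boldsymbol{\omega},k}\in\mathbb{F}^{(n-k)\times n}$. Moreover, in the setting of the main theorem one has $0\le k\le m$ together with $m<n$, so the strict inequality $k<n$ needed by the lemma holds. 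Thus setting $\boldsymbol{M}:=\boldsymbol{B}_{\boldsymbol{\omega},k}$ satisfies every requirement.

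Next I would simply apply Lemma \ref{lem:det2detp} to this $\boldsymbol{M}$. Since the matrix $\boldsymbol{X}_{\boldsymbol{\omega},k}$ appearing in the Corollary is the very one defined in \eqref{eq:Xw} and used throughout, the stacked matrix is square of order $n$, and the lemma's identity
\[
\operatorname*{detp}\nolimits_{\boldsymbol{\omega}}\boldsymbol{M}=\begin{vmatrix}\boldsymbol{M}\\\boldsymbol{X}_{\boldsymbol{\omega},k}\end{vmatrix}
\]
becomes precisely \eqref{eq:det2detp} after the substitution $\boldsymbol{M}=\boldsymbol{B}_{\boldsymbol{\omega},k}$. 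There is no genuine obstacle here; the only point worth guarding is the routine bookkeeping that the row and column counts of $\boldsymbol{B}_{\boldsymbol{\omega},k}$ and $\boldsymbol{X}_{\boldsymbol{\omega},k}$ match so that the determinant is well defined.

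Finally, I would note the payoff of this Corollary: combining it with Lemma \ref{lem:root2coeffs}, which expresses $S_k$ as $c_{\boldsymbol{\omega}}$ times the same stacked determinant, gives $S_k=c_{\boldsymbol{\omega}}\operatorname*{detp}\nolimits_{\boldsymbol{\omega}}\boldsymbol{B}_{\boldsymbol{\omega},k}$. This is exactly \eqref{eq:Sk_in_thm}, so the Corollary is the last link needed to close the proof of the main theorem.
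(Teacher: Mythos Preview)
Your proposal is correct and matches the paper's approach exactly: the paper states the Corollary ``immediately'' after specializing $\boldsymbol{M}$ in Lemma~\ref{lem:det2detp} with $\boldsymbol{B}_{\boldsymbol{\omega},k}$, which is precisely what you do (with some extra, harmless bookkeeping on dimensions and the condition $k<n$). Your closing remark about combining with Lemma~\ref{lem:root2coeffs} to obtain Theorem~\ref{thm:main} is also exactly how the paper proceeds in the next subsection.
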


\subsection{Proof of Theorem \ref{thm:main}}
Now we are ready to prove the main theorem (i.e., Theorem \ref{thm:main}).

\begin{proof}
By Lemma \ref{lem:root2coeffs}, we have
\[S_k=c_{\boldsymbol{\omega}}\cdot
\begin{vmatrix}
\boldsymbol{B}_{\boldsymbol{\omega},k}\\
\boldsymbol{X}_{\boldsymbol{\omega},k}
\end{vmatrix}.\]
By integrating \eqref{eq:det2detp} into the above equation, we obtain
\[
S_k=c_{\boldsymbol{\omega}}\cdot\operatorname*{detp}\nolimits_{\boldsymbol{\omega}}\boldsymbol{B}_{\boldsymbol{\omega},k}.\]
where $c_{\boldsymbol{\omega}}=(-1)^{n-k\choose 2}a_n^{m-n}$,
which completes the proof.
\end{proof}

\section{Conclusion and Perspectives}
\label{sec:conclusion}

In this paper, we propose an approach to formulate the B\'ezout-type subresultant polynomial of univariate polynomials expressed in general basis with the help of B\'ezout matrix in non-standard basis. Although the basis is changed, the essential properties of subresultant are maintained. This study is motivated by the observation that the  B\'ezout matrix of polynomials in general basis is usually simpler than the  B\'ezout matrix of polynomials obtained by converting the polynomials into standard basis (see Example \ref{ex}), which is also true for the resulting subresultant  polynomials.
The simple expression of the B\'ezout matrix in general basis could be very helpful for exploring the nice hidden structure of subresultants in general basis. However, in order to come up with an efficient way to compute resultant matrices and subresultant  polynomials in general basis, it is necessary to exploit the structure of given basis, which will become a critical issue to be studied in the next stage.

Another related question is how to design algorithms for computing subresultant  polynomials of polynomials in other bases  which share similar properties as general basis (e.g., Bernstein basis \cite{bl2004}). It is also worthy of further investigation in the future.

\medskip\noindent\textbf{Acknowledgements.} This research was supported by National Natural Science Foundation of China under Grant Nos. 12261010
and 11801101, and the Natural Science Cultivation Project of Guangxi Minzu University under Grant No. 2022MDKJ001.


\begin{thebibliography}{10}

\bibitem{aadr2006}
A.~Amiraslani, D.~A. Aruliah, and Robert~M. Corless.
\newblock The {R}ayleigh quotient iteration for generalized companion matrix
  pencils.
\newblock {\em Preprint}, 2006.

\bibitem{apery2006resultant}
Fran{\c{c}}ois Ap{\'e}ry and Jean-Pierre Jouanolou.
\newblock {\em R\'{e}sultant et sous-r\'{e}sultants: {E}limination - {L}e Cas
  d'une variable - avec exercices corrig{\'e}s}.
\newblock Hermann, 2006.

\bibitem{acgs2007}
Dhavide Aruliah, Robert Corless, Laureano Gonz\'{a}lez-Vega, and Azar Shakoori.
\newblock Geometric applications of the {B}{\'e}zout matrix in the bivariate
  tensor-product lagrange basis.
\newblock 01 2007.

\bibitem{barnett1987bezoutian}
S~Barnett.
\newblock {\em A Bezoutian matrix for Chebyshev polynomials}.
\newblock University of Bradford, School of Mathematical Sciences, 1987.

\bibitem{barnett1971greatest}
Stephen Barnett.
\newblock Greatest common divisor of several polynomials.
\newblock In {\em Mathematical Proceedings of the Cambridge Philosophical
  Society}, volume~70, pages 263--268. Cambridge University Press, 1971.

\bibitem{bl2004}
Dario~A Bini and Luca Gemignani.
\newblock Bernstein-{B}ezoutian matrices.
\newblock {\em Theoretical Computer Science}, 315(2-3):319--333, 2004.

\bibitem{bostan2017}
Alin Bostan, Carlos D'Andrea, Teresa Krick, Agnes Szanto, and Marcelo
  Valdettaro.
\newblock Subresultants in multiple roots: an extremal case.
\newblock {\em Linear Algebra and its Applications}, 529:185--198, 2017.

\bibitem{cp1993}
Jes{\'u}s~M Carnicer and Juan~Manuel Pena.
\newblock Shape preserving representations and optimality of the {B}ernstein
  basis.
\newblock {\em Advances in Computational Mathematics}, 1(2):173--196, 1993.

\bibitem{collins1967}
George~E Collins.
\newblock Subresultants and reduced polynomial remainder sequences.
\newblock {\em Journal of the ACM (JACM)}, 14(1):128--142, 1967.

\bibitem{cox2021}
David~A Cox and Carlos D'Andrea.
\newblock Subresultants and the shape lemma.
\newblock {\em arXiv:2112.10306}, 2021.

\bibitem{2006D'Andrea}
Carlos D'Andrea, Teresa Krick, and Agnes Szanto.
\newblock Multivariate subresultants in roots.
\newblock {\em Journal of Algebra}, 302(1):16--36, 2006.

\bibitem{jj2003}
J.~Delgado and J.~M. Pe{\~n}a.
\newblock A shape preserving representation with an evaluation algorithm of
  linear complexity.
\newblock {\em Computer Aided Geometric Design}, 20(1):1--10, 2003.

\bibitem{diaz2004various}
Gema~M Diaz-Toca and Laureano Gonzalez-Vega.
\newblock Various new expressions for subresultants and their applications.
\newblock {\em Applicable Algebra in Engineering, Communication and Computing},
  15(3):233--266, 2004.

\bibitem{DazToca2014TheNS}
Gema~Mar{\'i}a D{\'i}az-Toca and Mario Fioravanti.
\newblock The null space of the {B}ezout matrix in any basis and gcd's.
\newblock In {\em Proceedings of the Conference Encuentros de \'{a}lgebra
  Computacional y Aplicaciones (EACA' 2014)}, 2014.

\bibitem{fr1987}
Rida~T Farouki and VT~Rajan.
\newblock On the numerical condition of polynomials in {B}ernstein form.
\newblock {\em Computer Aided Geometric Design}, 4(3):191--216, 1987.

\bibitem{goodman1991}
T.~N.~T. Goodman and H.~B. Said.
\newblock Shape preserving properties of the generalised {B}all basis.
\newblock {\em Computer Aided Geometric Design}, 8(2):115--121, 1991.

\bibitem{Ho_1989}
Chung-Jen Ho.
\newblock {\em Topics in Algebraic Computing: Subresultants, GCD, Factoring and
  Primary Ideal Decomposition}.
\newblock PhD thesis, USA, 1989.

\bibitem{hoon2002}
H.~Hong.
\newblock Subresultants in roots.
\newblock In {\em Proceedings of 8th International Conference on Applications
  of Computer Algebra (ACA'2002)}. Volos, Greece, 2002.

\bibitem{1999_Hong}
Hoon Hong.
\newblock Subresultants in roots.
\newblock Technical report, Department of Mathematics. North Carolina State
  University, 1999.

\bibitem{hy2021}
Hoon Hong and Jing Yang.
\newblock A condition for multiplicity structure of univariate polynomials.
\newblock {\em Journal of Symbolic Computation}, 104:523--538, 2021.

\bibitem{hongyang2021subresultant}
Hoon Hong and Jing Yang.
\newblock Subresultant of several univariate polynomials.
\newblock {\em arXiv:2112.15370}, 2021.

\bibitem{houwang2000}
Xiaorong Hou and Dongming Wang.
\newblock Subresultants with the {B}{\'e}zout matrix.
\newblock In {\em Computer Mathematics}, pages 19--28. World Scientific, 2000.

\bibitem{lascoux2003}
Alain Lascoux and Piotr Pragacz.
\newblock Double {S}ylvester sums for subresultants and multi-{S}chur
  functions.
\newblock {\em Journal of Symbolic Computation}, 35(6):689--710, 2003.

\bibitem{li2006}
Yong-Bin Li.
\newblock A new approach for constructing subresultants.
\newblock {\em Applied Mathematics and Computation}, 183(1):471--476, 2006.

\bibitem{mm2007}
Ana Marco and Jos{\'e}-Javier Mart{\'\i}nez.
\newblock {B}ernstein--{B}ezoutian matrices and curve implicitization.
\newblock {\em Theoretical computer science}, 377(1-3):65--72, 2007.

\bibitem{sylvester1853}
Sylvester.
\newblock On a theory of syzygetic relations of two rational integral
  functions, comprising an application to the theory of {S}turm's functions,
  and that of the greatest algebraic common measure.
\newblock {\em Phil. Trans}, 143:407--548, 1853.

\bibitem{terui2008}
Akira Terui.
\newblock Recursive polynomial remainder sequence and its subresultants.
\newblock {\em Journal of Algebra}, 320(2):633--659, 2008.

\bibitem{wangyang2022sres}
Weidong Wang and Jing Yang.
\newblock Generalized companion subresultants of several univariate polynomials
  in newton basis.
\newblock {\em arXiv:2212.03422}, 2022.

\bibitem{Wu2010}
H.~Wu, X.~Wu, and Z.H. Yang.
\newblock More on polynomial {B}ezoutians with respect to a general basis.
\newblock {\em Electronic Journal of Linear Algebra}, 21:14, 2010.

\bibitem{YANG2001}
Z.H. Yang.
\newblock Polynomial {B}ezoutian matrix with respect to a general basis.
\newblock {\em Linear Algebra and its Applications}, 331(1):165--179, 2001.

\end{thebibliography}
\end{document}